\documentclass[11pt]{article}
\bibliographystyle{abbrv}
\usepackage{lineno,hyperref}
\usepackage{bbm}
\usepackage{float}
\usepackage{mathrsfs}
\usepackage{graphicx}
\usepackage[utf8]{inputenc}
\usepackage{amsmath}
\usepackage{amstext}
\usepackage{amsfonts}
\usepackage{amsthm}
\usepackage{amssymb}
\usepackage[bf,SL,BF]{subfigure}
\usepackage{subfigure}
\usepackage{caption}
\newcommand{\triple}{{\vert\kern-0.25ex\vert\kern-0.25ex\vert}}
\theoremstyle{plain}
\allowdisplaybreaks
\newtheorem{definition}{Definition}[section]
\newtheorem{theorem}[definition]{Theorem}
\newtheorem{lemma}[definition]{Lemma}

\newtheorem{assumption}[definition]{Assumption}
\theoremstyle{definition}

\begin{document}
\title{\bf Numerical Method for Highly Non-linear Mean-reverting Asset Price Model with CEV-type Process}
\author
{{\bf Emmanuel Coffie \footnote{Email: emmanuel.coffie@strath.ac.uk}}
\\[0.2cm]
Department of Mathematics and Statistics,
 \\[0.2cm]
University of Strathclyde,  Glasgow G1 1XH, U.K.}
\date{}
\maketitle
\begin{abstract}
It is well documented from various empirical studies that the volatility process of an asset price dynamics is stochastic. This phenomenon called for a new approach to describing the random evolution of volatility through time with stochastic models. In this paper, we propose a mean-reverting theta-rho model for asset price dynamics where the volatility diffusion factor of this model follows a highly non-linear CEV-type process. Since this model lacks a closed-form formula, we construct a new truncated EM method to study it numerically under the Khasminskii-type condition. We justify that the truncated EM solutions can be used to evaluate a path-dependent financial product.

\medskip \noindent
{\small\bf Key words}: Asset price model, stochastic volatility, truncated EM scheme, strong convergence, financial product, Monte Carlo scheme.
\end{abstract}

\section{Introduction}
Several stochastic differential equations (SDEs) have been developed to describe random evolution of financial variables in time. The Black-Scholes model in \cite{blackshole} is widely used to describe time-series evolution of asset price dynamics under one of the assumptions that the asset price is log-normally distributed. However, as supported by many empirical evidence, the log-normality assumption does not hold exactly in reality. Various alternative stochastic models have since been proposed as modified versions of the Black-Scholes model. In 1977, Vasicek in \cite{Vasicek} developed the well-known mean-reverting model as an alternative model for capturing short-term interest rate dynamics through time. This model is governed by
\begin{equation}\label{sec1:eq:1}
dx(t)=\alpha_1(\mu_1-x(t))dt+\sigma_1dB_1(t)
\end{equation}
with initial data $x(0)=x_0$, where $\alpha_1,\mu_1,\sigma_1>0$ and $B_1(t)$ is a scalar Brownian motion. One main unique feature of SDE \eqref{sec1:eq:1} is that the expectation of $x(t)$ converges to the long-term value $\mu_1$ with the speed $\alpha_1$. However, in practice, this model yields negative values . Cox, ingersoll and Ross (CIR) in \cite{cox1} addressed this drawback by extending SDE \eqref{sec1:eq:1} to an alternative model often called the mean-reverting square root process which is driven by
\begin{equation}\label{sec1:eq:2}
dx(t)=\alpha_1(\mu_1-x(t))dt+\sigma_1\sqrt{x(t)}dB_1(t).
\end{equation}
The square root diffusion factor avoids possible negative values. Later, Lewis in \cite{lewis} generalised SDE \eqref{sec1:eq:2} to the  mean-reverting-theta process governed by
\begin{equation}\label{sec1:eq:3}
dx(t)=\alpha_1(\mu_1-x(t))dt+\sigma_1x(t)^{\theta}dB_1(t),
\end{equation}
where $\theta\ge 1/2$. The SDE \eqref{sec1:eq:3} has been found a useful tool  for modelling interest rate, asset price and other financial variables. However, by applying $\chi^2$ tests to US treasury bill data, it has been shown that $\theta>1$. For instance, Chan et al. in \cite{Chan} applied the Generalised Moment method to the treasury bill data to estimate $\theta=1.449$. Similarly, with the same data, Nowman in \cite{Nowman} also estimated $\theta=1.361$ using the Gaussian Estimation method. 
\par
There are many literature where several classes of SDE \eqref{sec1:eq:3} with parametric restrictions have been studied. For instance,  Higham and Mao in \cite{highamao} studied strong convergence of Monte Carlo simulations involving SDE \eqref{sec1:eq:3} for $\theta=1/2$. Mao in \cite{maobook} studied strong convergence of EM method for SDE \eqref{sec1:eq:3} when $\theta \in [1/2,1]$. Wu et al. in \cite{wu} established weak convergence of EM method for $\theta > 1$. Dong-Hyun et. al. documented a unique type of SDE \eqref{sec1:eq:3} in \cite{Ahn} which admits closed-form solutions for bond prices and a concave relationship between interest rates and yields. Further discussions relating to SDE \eqref{sec1:eq:3} could also be found in \cite{Yang}, \cite{implicit}, among others. 
\par
The original Black-Scholes model assumes constant volatility for asset price and even for options with different maturities and strikes over a trading period. This assumption makes the Black-Scholes model reproduce flat volatility surface in option pricing. However, in practice, volatility has been observed empirically to change as asset price changes. Essentially, this means that volatility is characterised by a smile or skew surface instead of a flat surface. This characteristic is important for pricing and evaluating complex financial derivatives. As a result, several authors have proposed a variety of volatility models to explain the volatility surface curve adequately. For instance, Dupire (1994) developed the local volatility model in \cite{Dupire} to precisely match the observed smile or skew surface of market volatility data. Subsequently, stochastic volatility models have also been introduced as alternative models for modelling the random nature of volatility through time. One of the most notable stochastic volatility models is the diffusion class of Constant Elasticity of Variance (CEV) model driven by
\begin{equation}\label{sec1:eq:4}
d\varphi(t)=\mu_2\varphi(t)dt+\sigma_2\varphi(t)^{\phi}dB_2(t),
\end{equation}
with initial data $\varphi(0)=\varphi_0$, $\mu_2,\sigma_2>0$, $\phi>1$ and $B_2(t)$ is a scalar Brownian motion. SDE \eqref{sec1:eq:4} is widely used by researchers and market practitioners for modelling volatility and other financial quantities (see, e.g., \cite{CoxJ1,CoxJ2}). In 2012, the authors in \cite{Baduraliya} established the weak convergence result of the Hull and White type model where the instantaneous volatility follows
\begin{equation}\label{sec1:eq:5}
d\varphi(t)=\alpha_2(\mu_2-\varphi(t))dt+\sigma_2\varphi(t)^{\phi}dB_2(t),
\end{equation}
for $\phi>1$. The reader is referred, for example, to \cite{HullWhite,Hagan,Heston} for further coverage of stochastic volatility models in finance.
\par
From the empirical viewpoint, it would be more desirable in modelling context to generalise SDE \eqref{sec1:eq:3} as a highly non-linear SDE of the form
\begin{equation}\label{sec1:eq:6}
dx(t)=\alpha_1(\mu_1-x(t)^{\rho})dt+\sigma_1\sqrt{\varphi(t)}x(t)^{\theta}dB_1(t)
\end{equation}
for asset price dynamics, where $\rho>1$. Here, the variance function $\varphi(t)$ is driven by a highly non-linear type of SDE \eqref{sec1:eq:5} of the form
\begin{equation}\label{sec1:eq:7}
d\varphi(t)=\alpha_2(\mu_2-\varphi(t)^{r})dt+\sigma_2\varphi(t)^{\phi}dB_2(t),
\end{equation}
where $r>1$ and $B_1(t)$ is independent of $B_2(t)$. 
\par
The highly non-linear component of SDE \eqref{sec1:eq:6} makes it well-suited for explaining non-linearity in asset price. On the other hand, the inherent super-linear CEV dynamics may capture extreme non-linearity in market volatility to reproduce volatility surface curve adequately. Obviously, SDE \eqref{sec1:eq:6} is not analytically tractable. The drift and diffusion terms are of super-linear growth. In this case, we recognise the need to develop an implementable numerical method to estimate the exact solution. However, to the best of our knowledge, there exists no relevant literature devoted to the convergent approximation of the system of SDE \eqref{sec1:eq:6} in the strong sense. In this paper, we aim to close this gap by constructing several new numerical tools to study this model from viewpoint of financial applications. 
\par
The rest of the paper is organised as follows: In Section \ref{sect2}, we introduce some useful mathematical notations. In Section \ref{sect3}, we study the existence of a unique positive solution of SDE \eqref{sec1:eq:6} and establish the finite moment of the solution. We construct a new truncated EM method to approximate SDE \eqref{sec1:eq:6} in Section \ref{sect4}. In Section \ref{sect5}, we study numerical properties such as the finite moment and the finite time strong convergence of the numerical solutions. We implement numerical examples to validate the theoretical findings and conclude the paper with a financial application in Section \ref{sect6}.
\section{Preliminaries}\label{sect2}
Throughout this paper unless specified otherwise, we employ the following notation. Let $\{ \Omega,\mathcal{F},\mathbb{P}\}$ be a complete probability space with a filtration $\{ \mathcal{F}_t\}_{t\geq 0}$ satisfying the usual conditions (i.e., it is increasing and right continuous while $\mathcal{F}_0$ contains all $\mathbb{P}$ null sets), and let $\mathbb{E}$ denote the expectation corresponding to $\mathbb{P}$. Let $B_1(t)$ and $B_2(t)$, $t\geq 0$, be scalar Brownian motions defined on the above probability space and are independent of each other. If $x, y$ are real numbers, then we denote $x\vee y$ as the maximum of $x$ and $y$, and $x\wedge y$ as the minimum of $x$ and $y$. Let $\mathbb{R}=(-\infty,\infty)$ and $\mathbb{R}_+=(0,\infty)$. For an empty set $\emptyset$, we set $\inf\emptyset=\infty$. For a set $A$, we denote its indicator function by $1_A$. Moreover, we let $T$ be an arbitrary positive number. Now consider the following scalar dynamics
\begin{align}
dx(t)&=f_1(x(t))dt+\sqrt{\varphi(t)}g_1(x(t))dB_1(t)\label{sec2:eq:1}\\
d\varphi(t)&=f_2(\varphi(t))dt+g_2(\varphi(t))dB_2(t)\label{sec2:eq:2},
\end{align}
as equations of SDEs \eqref{sec1:eq:6} and \eqref{sec1:eq:7}, where $f_1(x)=\alpha_1(\mu_1-x^{\rho})$, $g_1(x)=\sigma_1x^{\theta}$, $f_2(\varphi)=\alpha_2(\mu_2-\varphi^{r})$ and $g_2(\varphi)=\sigma_2\varphi^{\phi}$. Let $H\in C^{2,1}(\mathbb{R}\times \mathbb{R}_+;\mathbb{R})$, where $C^{2,1}(\mathbb{R}\times \mathbb{R}_+;\mathbb{R})$ is the family of all real-valued functions $H(\cdot,\cdot)$ defined on $\mathbb{R}\times \mathbb{R}_+$. Also let $LH:\mathbb{R}\times \mathbb{R}_+\rightarrow \mathbb{R}$ be the It\^{o} diffusion operator such that
\begin{align}
LH(x,\varphi,t)&=H_t(x,t)+H_x(x,t)f_1(x)+\varphi\frac{1}{2}H_{xx}(x,t) g_1(x)^2\label{sec2:eq:3}\\
LH(\varphi,t)&=H_t(\varphi,t)+H_{\varphi}(\varphi,t)f_2(\varphi)+\frac{1}{2}H_{\varphi\varphi}(\varphi,t)g_2(\varphi)^2\label{sec2:eq:4},
\end{align}
where $H_t(x,t)$, $H_t(\varphi,t)$, $H_x(x,t)$ and $H_{\varphi}(\varphi,t)$  are first-order partial derivatives with respect to $t$, $x$ and $\varphi$, and, $H_{xx}(x,t)$ and $H_{\varphi\varphi}(\varphi,t)$ are second-order partial derivatives with respect to $x$ and $\varphi$ respectively. Given the diffusion operator, we can now write the It\^{o} formula as
\begin{align}
dH(x(t),t)&=LH(x(t),\varphi(t),t)dt+\sqrt{\varphi(t)}H_x(x(t),t)g_1(x(t))dB_1(t)\label{sec2:eq:5}\\
dH(\varphi(t),t)&=LH(\varphi(t),t)dt+H_{\varphi}(\varphi(t),t)g_2(\varphi(t))dB_2(t)\label{sec2:eq:6} \quad \text{ a.s. }
\end{align}
The reader may refer to \cite{maobook} for further details about the It\^{o} formula. 
\section{Theoretical properties}\label{sect3}
In this section, we discuss pathwise existence of unique positive solutions and finite moments of the solutions to SDEs \eqref{sec2:eq:1} and \eqref{sec2:eq:2}. The following assumption on the parameters is crucial to obtain the results.
\begin{assumption} \label{sec3:assump:1} 
The parameters of \textup{SDEs}  \eqref{sec2:eq:1} and \eqref{sec2:eq:2} satisfy
\begin{align}
1+ \rho &> 2\theta\label{sec3:eq:1},\\
1+ r&> 2\phi\label{sec3:eq:2}, 
\end{align}
for $ \rho,\theta,\phi,r>1$.
\end{assumption}
\subsection{Existence and uniqueness of solution}
\begin{lemma}\label{sec3:eq:000}
Let equation \eqref{sec3:eq:2} hold. Then there exists a unique global solution $\varphi(t)$ to \textup{SDE} \eqref{sec2:eq:2} on $t\in [0,T]$ for any given initial data $\varphi_0>0$ and $\varphi(t)>0$ a.s.
\end{lemma}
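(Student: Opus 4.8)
The plan is to follow the classical Khasminskii-type argument: first obtain a unique local solution from local Lipschitz continuity of the coefficients on $\mathbb{R}_+$, and then rule out both explosion to $+\infty$ and absorption at $0$ by controlling the expectation of a suitable Lyapunov function up to the fixed horizon $T$.

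First I would observe that $f_2(\varphi)=\alpha_2(\mu_2-\varphi^r)$ and $g_2(\varphi)=\sigma_2\varphi^\phi$ are smooth, hence locally Lipschitz, on the open half-line $\mathbb{R}_+$. Standard SDE theory then provides a unique maximal local solution $\varphi(t)$ starting from $\varphi_0>0$ and defined up to an explosion time $\tau_e=\lim_{n\to\infty}\tau_n$, where for all sufficiently large integers $n$ I set $\tau_n=\inf\{t\in[0,T]:\varphi(t)\notin(1/n,n)\}$. The whole task reduces to showing $\tau_e>T$ almost surely, which simultaneously rules out blow-up and guarantees positivity.

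The decisive step is to take the Lyapunov function $V(\varphi)=\varphi^p+\varphi^{-q}$ for fixed $p,q>0$ and estimate $LV$ via \eqref{sec2:eq:4}. For the $\varphi^p$ component, the two highest-order contributions are the mean-reversion term $-p\alpha_2\varphi^{p+r-1}$ and the diffusion term $\tfrac12 p(p-1)\sigma_2^2\varphi^{p+2\phi-2}$; assumption \eqref{sec3:eq:2}, namely $1+r>2\phi$, gives $p+r-1>p+2\phi-2$, so the negative mean-reversion term dominates at infinity and this part of $LV$ is bounded above. For the $\varphi^{-q}$ component, the behaviour as $\varphi\to0^+$ is governed by the term $-q\alpha_2\mu_2\varphi^{-q-1}$, which is negative and (because $\phi>1/2$) outweighs the diffusion term of order $\varphi^{2\phi-q-2}$, so this part is likewise bounded above. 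Together these yield a constant $K=K(p,q,T)>0$ with $LV(\varphi)\le K(1+V(\varphi))$ for all $\varphi>0$.

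I would then apply the It\^o formula \eqref{sec2:eq:6} to $V(\varphi(t\wedge\tau_n))$, take expectations to remove the martingale part, and invoke the Gronwall inequality to deduce $\mathbb{E}V(\varphi(T\wedge\tau_n))\le(V(\varphi_0)+KT)e^{KT}=:C_T$, a bound independent of $n$. On the event $\{\tau_n\le T\}$ the path has reached level $n$ or $1/n$, so $V(\varphi(\tau_n))\ge n^p\wedge n^q$, and comparison with $C_T$ forces $\mathbb{P}(\tau_n\le T)\le C_T/(n^p\wedge n^q)\to0$. Hence $\tau_e>T$ a.s. and $\varphi(t)$ remains in $(0,\infty)$ on $[0,T]$; as $T$ is arbitrary the solution is global and positive. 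The main obstacle I anticipate is purely the bookkeeping in verifying $LV(\varphi)\le K(1+V(\varphi))$ --- confirming that \eqref{sec3:eq:2} is precisely the inequality making the super-linear drift outgrow the CEV diffusion at infinity, while separately checking that the mean-reversion term repels the path away from $0$; the stopping-time-plus-Gronwall conclusion afterwards is routine.
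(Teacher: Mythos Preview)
Your argument is correct and follows the same Khasminskii non-explosion strategy as the paper: local Lipschitz gives a local solution, a Lyapunov function that blows up at both endpoints of $(0,\infty)$ yields $\mathbb{P}(\tau_n\le T)\to0$. The only substantive difference is the choice of Lyapunov function. The paper takes $H(\varphi)=\varphi^{\gamma}-1-\gamma\log\varphi$ with $\gamma\in(0,1)$ and verifies directly that $LH(\varphi)\le K_0$ for a constant $K_0$, so no Gronwall step is needed: one immediately obtains $\mathbb{E}H(\varphi(T\wedge\tau_n))\le H(\varphi_0)+K_0T$ and divides by $H(1/n)\wedge H(n)$. Your choice $V(\varphi)=\varphi^{p}+\varphi^{-q}$ works equally well; in fact the same term-by-term comparison you outline (condition \eqref{sec3:eq:2} at infinity, the $\mu_2$-term near zero) shows $LV$ is bounded above by a \emph{constant} on $(0,\infty)$, so your intermediate bound $LV\le K(1+V)$ and the Gronwall step are not actually needed. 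Either Lyapunov function delivers the result; the paper's is slightly more economical, while yours gives as a by-product polynomial moment bounds of both positive and negative order.
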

\begin{proof}
Apparently, the coefficients of SDE \eqref{sec2:eq:2} are locally Lipschitz continuous in $\mathbb{R}$. Hence there exists a unique positive maximal local solution $\varphi(t)$ on $t\in [0,\tau_e)$, where $\tau_e$ is the explosion time (e.g., see \cite{maobook,Baduraliya}). Let us extend the domain of SDE \eqref{sec2:eq:2} from $\mathbb{R}_+$ to $\mathbb{R}$ by setting the coefficients to 0 for $\varphi(t)<0$. Then for every sufficiently large integer $n>0$, such that $\varphi(0)\in (1/n,n)$, define the stopping time as 
\begin{equation}\label{sec3:eq:3}
\tau_n=\inf\{t\in [0,\tau_e):\vert\varphi(t)\vert\notin (1/n,n)\}
\end{equation}
and set $\tau_{\infty}=\lim_{n\rightarrow \infty}\tau_n$. To complete the proof, we need to show that $\tau_{\infty}=\infty$ a.s. That is, it is enough to prove that $\mathbb{P}(\tau_n\le T)\rightarrow 0$ as $n\rightarrow \infty$ for any given $T>0$ and hence, $\mathbb{P}(\tau_{\infty}=\infty)=1$. For $\gamma\in(0,1)$, define a $C^2$-function $H:\mathbb{R}_+\rightarrow \mathbb{R}_+$ by
\begin{equation}\label{sec3:eq:4}
H(\varphi)=\varphi^{\gamma}-1-\gamma\log(\varphi).
\end{equation}
Apparently, $H(\varphi)\rightarrow \infty$ as $\varphi\rightarrow 0$ or $\varphi\rightarrow \infty$. By \eqref{sec2:eq:4}, we compute
\begin{align*}
LH(\varphi)&=\gamma(\varphi^{\gamma-1}-1/\varphi)f_2(\varphi)+\frac{1}{2}(\gamma(\gamma-1)\varphi^{\gamma-2}+\gamma\varphi^{-2})g_2(\varphi)^2\\
&=\alpha_2\mu_2\gamma\varphi^{\gamma-1}-\alpha_2\gamma            \varphi^{\gamma+r-1}-\alpha_2\mu_2\gamma\varphi^{-1}+\alpha_2\gamma\varphi^{r-1}\\
&+ \frac{\sigma_2}{2}\gamma(\gamma-1)\varphi^{\gamma+2\phi-2}
           +\frac{\sigma_2}{2}\gamma\varphi^{2\phi-2}.
\end{align*}
So clearly, for $\gamma\in(0,1)$ and by \eqref{sec3:eq:2}, we infer $-\alpha_2\mu_2\gamma\varphi^{-1}$ leads and tends to $-\infty$ for small $\varphi$. Similarly, we infer $-\alpha_2\gamma \varphi^{\gamma+r-1}$ leads and also tends to $-\infty$ for large $\varphi$. So there exists a constant $K_0$
such that
\begin{equation}\label{sec3:eq:5}
LH(\varphi)\leq K_0.
\end{equation} 
By the It\^{o} formula, we have 
\begin{equation}
\mathbb{E}[H(\varphi(T\wedge\tau_n))]\le H(\varphi_0)+K_0T.
\end{equation}
It then follows that,
\begin{equation}
\mathbb{P}(\tau_n\le T)\le \frac{H(\varphi_0)+K_0T}{H(1/n)\wedge H(n)}.
\end{equation}
This implies that $\lim_{n\rightarrow \infty}\mathbb{P}(\tau_n\le T)\rightarrow 0$ as required. 
\end{proof}
\begin{lemma}\label{sec3:eq:00}
Let Assumption \ref{sec3:assump:1} hold. Then for any given initial data $x_0>0$ and $\varphi_0>0$, there exists a unique global solution $x(t)$ to \textup{SDE} \eqref{sec2:eq:1} on $t\in [0,T]$ and $x(t)>0$ a.s. 
\end{lemma}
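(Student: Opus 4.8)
The plan is to follow the Lyapunov/stopping-time scheme of Lemma~\ref{sec3:eq:000}, the genuinely new difficulty being the random factor $\sqrt{\varphi(t)}$ multiplying $g_1(x)$ in \eqref{sec2:eq:1}. Since $f_1$ and $g_1$ are locally Lipschitz on $\mathbb{R}_+$, extending them by $0$ on $(-\infty,0]$ produces a unique positive maximal local solution $x(t)$ on some $[0,\tau_e)$; as Lemma~\ref{sec3:eq:000} already supplies a global positive $\varphi(t)$, the pair $(x(t),\varphi(t))$ is well defined up to $\tau_e$. For large integers $n$ with $x_0,\varphi_0\in(1/n,n)$ I would set
\[
\tau_n=\inf\{t\in[0,\tau_e): x(t)\notin(1/n,n)\ \text{or}\ \varphi(t)\notin(1/n,n)\},\qquad \tau_\infty=\lim_{n\to\infty}\tau_n .
\]
Because the $\varphi$-exit times already diverge by Lemma~\ref{sec3:eq:000}, it suffices to prove $\mathbb{P}(\tau_n\le T)\to 0$ for every fixed $T$: this forces the $x$-exit times to diverge too, so $x$ neither explodes nor reaches $0$, i.e. $\tau_\infty=\infty$ and $x(t)>0$ a.s.

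The heart of the argument is to find a test function whose It\^o drift stays bounded above in spite of the unbounded multiplier $\varphi$. A function of $x$ alone will not suffice. Taking $H_1(x)=x^{\gamma}-1-\gamma\log x$ with $\gamma\in(0,1)$ (the same shape used in Lemma~\ref{sec3:eq:000}), the operator \eqref{sec2:eq:3} splits as a deterministic part $P(x)$ plus a diffusion part $\varphi\,Q(x)$. Here $P(x)$ is bounded above because the mean-reversion term $-\gamma\alpha_1 x^{\gamma+\rho-1}$ dominates $x^{\rho-1}$ for large $x$ while $-\gamma\alpha_1\mu_1 x^{-1}\to-\infty$ for small $x$, and $Q(x)$ is bounded above by a constant $C$ since its top-order coefficient $(\gamma-1)$ is negative. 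Thus $LH_1(x,\varphi)\le K_1+C\varphi$, and the residual $C\varphi$ is precisely the obstacle created by the stochastic volatility.

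To absorb it I would work with the joint process $(x,\varphi)$ and the augmented function
\[
U(x,\varphi)=\big(x^{\gamma}-1-\gamma\log x\big)+\big(\varphi^{\gamma}-1-\gamma\log\varphi\big)+\kappa\,\varphi ,
\]
for a constant $\kappa>0$ to be fixed. Since $B_1$ and $B_2$ are independent, the generator of the pair carries no cross term, so $\mathcal{L}U=LH_1(x,\varphi)+L_\varphi H_2(\varphi)+\kappa f_2(\varphi)$, where $H_2(\varphi)=\varphi^{\gamma}-1-\gamma\log\varphi$. The middle term is $\le K_0$ by the computation already carried out in Lemma~\ref{sec3:eq:000} (this is where \eqref{sec3:eq:2} enters), and the last term equals $\kappa\alpha_2\mu_2-\kappa\alpha_2\varphi^{r}$. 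Because $r>1$, the super-linear pull-back $-\kappa\alpha_2\varphi^{r}$ dominates the leftover $C\varphi$, so $C\varphi-\kappa\alpha_2\varphi^{r}$ is bounded above and hence $\mathcal{L}U\le K$ for a single constant $K$ independent of $n$.

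With the operator bound in hand the conclusion is routine: applying the It\^o formulas \eqref{sec2:eq:5} and \eqref{sec2:eq:6} to the three summands of $U$ separately and adding, then stopping at $T\wedge\tau_n$ (the local-martingale parts having zero mean because $x$ and $\varphi$ are bounded on $[0,T\wedge\tau_n]$), gives $\mathbb{E}\,U(x(T\wedge\tau_n),\varphi(T\wedge\tau_n))\le U(x_0,\varphi_0)+KT$. Since $U\ge 0$ and $U$ blows up on the boundary of $(1/n,n)^2$ -- the logarithmic terms handle $x\to 0$ and $\varphi\to 0$ while the powers handle $x,\varphi\to\infty$ -- writing $\mu_n$ for the minimum of $U$ over that boundary yields $\mathbb{P}(\tau_n\le T)\le (U(x_0,\varphi_0)+KT)/\mu_n\to 0$, as required. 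I expect the main obstacle to be exactly the design of $U$: recognising that no $x$-only Lyapunov function can control the $\varphi$-driven diffusion term, and that augmenting it by a linear term in $\varphi$ lets the super-linear mean reversion of \eqref{sec2:eq:2} cancel the residual.
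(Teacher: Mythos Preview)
Your argument is correct but follows a genuinely different route from the paper's. The paper also starts from the $x$-only function $H(x)=x^{\gamma}-1-\gamma\log x$, but instead of augmenting it, it handles the unbounded factor $\varphi$ by \emph{localisation}: it introduces two separate stopping times $\varsigma_n$ (exit of $x$ from $(1/n,n)$) and $\tau_m$ (exit of $\varphi$ from $(1/m,m)$), works on $[0,T\wedge\varsigma_n\wedge\tau_m]$ where trivially $\varphi\le m$, and so obtains $LH(x,\varphi)\le K_1(m)$ with a constant depending on $m$. This yields $\mathbb{P}(\varsigma_n\le T\wedge\tau_m)\to 0$ as $n\to\infty$ for each fixed $m$, and a second limit $m\to\infty$, justified by Lemma~\ref{sec3:eq:000}, closes the argument. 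Your joint function $U(x,\varphi)$ with the additional linear piece $\kappa\varphi$ lets the super-linear drift $-\kappa\alpha_2\varphi^{r}$ swallow the residual $C\varphi$ and delivers a single uniform bound $\mathcal{L}U\le K$, so only one limit is needed and no $m$-dependent constants appear. A pleasant by-product is that your route never actually uses condition~\eqref{sec3:eq:1}: the boundedness of $Q(x)$ follows from $\theta>1$ and $\gamma<1$ alone, whereas the paper needs \eqref{sec3:eq:1} so that $-\alpha_1\gamma x^{\gamma+\rho-1}$ dominates the localised diffusion term $\tfrac{m\sigma_1^2}{2}\gamma x^{2\theta-2}$ for large $x$. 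The trade-off is that the paper's decoupled scheme is conceptually lighter (no need to design a joint Lyapunov function), at the cost of the nested limit $n\to\infty$ then $m\to\infty$.
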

\begin{proof}
Similarly, we treat SDE \eqref{sec2:eq:1} as an SDE in $\mathbb{R}^2$ by setting its coefficients to 0 whenever $x(t)<0$ or $\varphi(t)<0$. Obviously, the coefficients are locally Lipschitzian. Thus there exists a unique positive maximal local solution $(x(t),\varphi(t))$ on $t\in [0,\tau_e)$, where $\tau_e$ is the explosion time (e.g., see \cite{Baduraliya}). So for any sufficiently large integer $n>0$, define the stopping times 
\begin{align*}
\varsigma_n&=\tau_e\wedge\inf\{t\in [0,\tau_e):\vert x(t)\vert\notin (1/n,n)\},\\
\tau_m&=\tau_e\wedge\inf\{t\in [0,\tau_e):\vert\varphi(t)\vert\notin (1/m,m)\}.
\end{align*}
Now let
\begin{equation}\label{sec3:eq:5}
\varrho_{mn}=\varsigma_n\wedge \tau_m.
\end{equation}
Set $\varsigma_{\infty}=\lim_{n\rightarrow \infty}\varsigma_n$ and $\tau_{\infty}=\lim_{m\rightarrow \infty}\tau_m$. Define a $C^2$-function by
\begin{equation}\label{sec3:eq:6}
H(x)=x^{\gamma}-1-\gamma\log(x).
\end{equation}
for $\gamma\in (0,1)$. Then for $s\in[0,T\wedge \varrho_{mn}]$, we apply \eqref{sec2:eq:3} to compute 
\begin{align*}
&LH(x(s),\varphi(s))\\
&=\gamma(x(s)^{\gamma-1}-1/x(s))f_1(x(s))+\frac{1}{2}(\gamma(\gamma-1)x(s)^{\gamma-2}+\gamma x(s)^{-2})\varphi(s) g_1(x(s))^2\\
&\le \alpha_1\mu_1\gamma x(s)^{\gamma-1}-\alpha_1\gamma x(s)^{\gamma+\rho-1}-\alpha_1\mu_1\gamma x(s)^{-1}+\alpha_1\gamma x(s)^{\rho-1}\\
&+ \frac{\sigma_1}{2}\gamma(\gamma-1)x(s)^{\gamma+2\theta-2}
 +\frac{m\sigma_1}{2}\gamma x(s)^{2\theta-2},
\end{align*}
Moreover, we can derive that
\begin{align*}
\mathbb{E}[H(x(T\wedge \varrho_{mn}))]
&=\mathbb{E}[H(x(T\wedge \varsigma_n\wedge \tau_m))] \ge \mathbb{E}[H(x(\varsigma_n))1_{(\varsigma_n\le T\wedge\tau_m )}]\\
&\ge [H(1/n)\wedge H(n)]\mathbb{P}(\varsigma_n\le T\wedge\tau_m).
\end{align*}
By the It\^{o} formula, we derive that
\begin{align*}
\mathbb{E}[H(x(T\wedge \varrho_{mn}))]\le H(x_0)+ \mathbb{E}\int_0^{T\wedge \varrho_{mn}}LH(x(s),\varphi(s))ds.
\end{align*}
This implies 
\begin{align*}
[H(1/n)\wedge H(n)]\mathbb{P}(\varsigma_n\le T\wedge\tau_m)\le H(x_0)+ \mathbb{E}\int_0^{T\wedge \varrho_{mn}}LH(x(s),\varphi(s))ds.
\end{align*}
Meanwhile, for $\gamma\in(0,1)$ and by \eqref{sec3:eq:2}, we can find a constant $K_1$ such that
\begin{equation}
[H(1/n)\wedge H(n)]\mathbb{P}(\varsigma_n\le T\wedge\tau_m)\le H(x_0)+K_1T.
\end{equation}
This means we have
\begin{equation}
\mathbb{P}(\varsigma_n\le T\wedge \tau_m)\le\frac{H(x_0)+K_1T}{H(1/n)\wedge H(n)}.
\end{equation}
So, by letting $n\rightarrow \infty$, we obtain $\mathbb{P}(\varsigma_n\le T\wedge \tau_m)\rightarrow 0$. By setting $m\rightarrow \infty$ and using Lemma \ref{sec3:eq:000}, we have $\mathbb{P}(\varsigma_{\infty}\le T)=0$. This implies $\mathbb{P}(\varsigma_{\infty}>T)=1$.
\end{proof}
\subsection{Finite moments}
In the sequel, we show that the moments of SDEs \eqref{sec2:eq:1} and \eqref{sec2:eq:2} are finite.
\begin{lemma}\label{sec3:eq:L1}
Let equation \eqref{sec3:eq:2} hold. Then for any $p\geq 2$, the solution $\varphi(t)$ to \textup{SDE}  \eqref{sec2:eq:2} satisfies
\begin{equation}\label{sec3:eq:8}
\sup_{0\leq t<\infty}(\mathbb{E}|\varphi(t)|^p)\leq c_1,
\end{equation}
where $c_1$ is a constant.
\end{lemma}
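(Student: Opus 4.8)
The plan is to apply the It\^o formula to a Lyapunov function of the form $V(\varphi)=\varphi^p$ and to exploit the Khasminskii-type condition \eqref{sec3:eq:2} to absorb the superlinear diffusion contribution into the mean-reverting drift. First I would compute, using \eqref{sec2:eq:4}, the diffusion operator acting on $V$:
\begin{equation*}
LV(\varphi)=p\alpha_2\mu_2\varphi^{p-1}-p\alpha_2\varphi^{p+r-1}+\tfrac{1}{2}p(p-1)\sigma_2^2\varphi^{p+2\phi-2}.
\end{equation*}
The crucial observation is that condition \eqref{sec3:eq:2}, namely $1+r>2\phi$, forces $p+r-1>p+2\phi-2$, so the strictly negative term $-p\alpha_2\varphi^{p+r-1}$ has the highest order among all the terms and dominates as $\varphi\to\infty$, while every term vanishes as $\varphi\to 0^+$ because all the exponents are positive for $p\ge 2$.

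From this domination I would deduce that for any fixed $K_3>0$ the map $\varphi\mapsto LV(\varphi)+K_3\varphi^p$ is bounded above on $(0,\infty)$, which yields a constant $K_2$ with
\begin{equation*}
LV(\varphi)\le K_2-K_3\varphi^p,\qquad \varphi>0.
\end{equation*}
Next I would apply the It\^o formula \eqref{sec2:eq:6} to $e^{K_3 t}V(\varphi(t))$, localising along the stopping times $\tau_n$ of \eqref{sec3:eq:3} so that the stochastic integral is a genuine martingale with zero expectation; here I invoke Lemma \ref{sec3:eq:000} to guarantee that $\varphi(t)$ is global and strictly positive, so that $V(\varphi(t))=|\varphi(t)|^p$ throughout. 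Using the bound $K_3 V+LV\le K_2$ and taking expectations gives
\begin{equation*}
\mathbb{E}\big[e^{K_3(t\wedge\tau_n)}V(\varphi(t\wedge\tau_n))\big]\le \varphi_0^p+\frac{K_2}{K_3}\big(e^{K_3 t}-1\big).
\end{equation*}

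Letting $n\to\infty$ and appealing to Fatou's lemma, then dividing by $e^{K_3 t}$, I obtain
\begin{equation*}
\mathbb{E}|\varphi(t)|^p\le e^{-K_3 t}\varphi_0^p+\frac{K_2}{K_3}\big(1-e^{-K_3 t}\big)\le \varphi_0^p+\frac{K_2}{K_3},
\end{equation*}
which is independent of $t$; taking the supremum over $t\in[0,\infty)$ then delivers \eqref{sec3:eq:8} with $c_1=\varphi_0^p+K_2/K_3$. I expect the only delicate point to be the first step: verifying rigorously that the mean-reverting term outgrows the diffusion term, since this is exactly where hypothesis \eqref{sec3:eq:2} is indispensable --- without $1+r>2\phi$ the positive term $\tfrac{1}{2}p(p-1)\sigma_2^2\varphi^{p+2\phi-2}$ could dominate and the uniform moment bound would fail. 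The remaining manipulations, namely the localisation argument and the Gr\"onwall-type estimate built into the exponential weight $e^{K_3 t}$, are routine.
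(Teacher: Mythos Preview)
Your argument is correct. The paper itself does not supply a proof of this lemma but defers to \cite{emma}; however, the method you outline---applying the It\^o formula to an exponentially weighted power Lyapunov function, using condition \eqref{sec3:eq:2} to ensure the drift term $-p\alpha_2\varphi^{p+r-1}$ dominates the diffusion contribution $\tfrac{1}{2}p(p-1)\sigma_2^2\varphi^{p+2\phi-2}$, localising along $\tau_n$, and passing to the limit via Fatou---is precisely the technique the paper employs in the companion result Lemma~\ref{sec3:eq:L2} (there with the weight $e^t x^p$ in place of your $e^{K_3 t}\varphi^p$), so your proposal is in full agreement with the paper's approach.
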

See \cite{emma} for the proof.
\begin{lemma}\label{sec3:eq:L2}
Let Assumption \ref{sec3:assump:1} hold. Then for any $p\geq 2$, the solution $x(t)$ to \textup{SDE}  \eqref{sec2:eq:2} obeys
\begin{equation}\label{sec3:eq:9}
\sup_{0\leq t<\infty}\big(\mathbb{E}|x(t)|^p1_{(t\le \tau^*_m )}\big)\leq c_2,
\end{equation}
where for any sufficiently large integer $m>0$,
\begin{align*}
\tau_m^*&=\inf\{t\ge 0: \varphi(t)\notin (1/m,m)\}
\end{align*}
and $c_2$ is a constant dependent on $m$.
\end{lemma}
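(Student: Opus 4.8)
The plan is to work with the Lyapunov function $V(x)=x^{p}$ and to exploit the fact that on the stochastic interval $[0,\tau_m^*)$ the volatility factor is bounded, since $\varphi(t)\in(1/m,m)$ there. Because $x(t)>0$ a.s.\ by Lemma \ref{sec3:eq:00}, we have $|x(t)|^{p}=x(t)^{p}$, and applying the diffusion operator \eqref{sec2:eq:3} to $V$ gives
\begin{equation*}
LV(x,\varphi)=p\alpha_1\mu_1 x^{p-1}-p\alpha_1 x^{p-1+\rho}+\tfrac{1}{2}\sigma_1^2\varphi\,p(p-1)x^{p-2+2\theta}.
\end{equation*}
On the region where $\varphi<m$ I would replace $\varphi$ by $m$ to obtain a deterministic upper bound for $LV$ depending only on $x$ and $m$.

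The key structural step is to invoke Assumption \ref{sec3:assump:1}, specifically \eqref{sec3:eq:1}, i.e.\ $1+\rho>2\theta$. This guarantees that the exponent $p-1+\rho$ of the negative drift term strictly exceeds the exponent $p-2+2\theta$ of the (now bounded) diffusion term, and also exceeds $p$ since $\rho>1$. Hence the super-linear dissipative term $-p\alpha_1 x^{p-1+\rho}$ dominates every other contribution for large $x$, while all terms stay bounded as $x\to 0^{+}$. Using the elementary bound $x^{p}-x^{p-1+\rho}\le C$ for all $x>0$ (valid because $\rho>1$, so the difference has a finite global maximum), I would trade the leading term $-x^{p-1+\rho}$ for $-x^{p}$ at the cost of an additive constant, thereby establishing a dissipative estimate
\begin{equation*}
LV(x,\varphi)\le K'-\lambda V(x)\qquad\text{for }\varphi\in(1/m,m),
\end{equation*}
with $\lambda=p\alpha_1/2>0$ and a constant $K'$ depending on $m$ and the model parameters.

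To convert this pointwise dissipativity into a uniform-in-time moment bound, I would apply the It\^o formula to $e^{\lambda t}V(x(t))$ rather than to $V$ alone, so that the $\lambda V$ term generated by differentiating $e^{\lambda t}$ cancels against the $-\lambda V$ above, leaving $L\big(e^{\lambda t}V\big)\le K' e^{\lambda t}$ on $[0,\tau_m^*)$. Introducing an auxiliary localizing sequence to kill the stochastic integral, taking expectations over $[0,t\wedge\tau_m^*]$, and integrating yields
\begin{equation*}
\mathbb{E}\big[e^{\lambda(t\wedge\tau_m^*)}x(t\wedge\tau_m^*)^{p}\big]\le x_0^{p}+\frac{K'}{\lambda}\big(e^{\lambda t}-1\big).
\end{equation*}
On the event $\{t\le\tau_m^*\}$ one has $t\wedge\tau_m^*=t$, so restricting the left-hand side to this event, discarding the nonnegative complement, and dividing through by $e^{\lambda t}$ gives
\begin{equation*}
\mathbb{E}\big[x(t)^{p}1_{(t\le\tau_m^*)}\big]\le e^{-\lambda t}x_0^{p}+\frac{K'}{\lambda}\big(1-e^{-\lambda t}\big)\le x_0^{p}+\frac{K'}{\lambda}=:c_2,
\end{equation*}
which is independent of $t$ and depends on $m$, proving the claim.

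I expect the main obstacle to be bookkeeping rather than conceptual: one must handle the two nested stopping times (the localizer for the It\^o integral and the fixed $\tau_m^*$) together with the indicator $1_{(t\le\tau_m^*)}$ carefully, so that the expectation of the stochastic integral genuinely vanishes and the restriction to $\{t\le\tau_m^*\}$ is legitimate. The dissipativity inequality itself, made possible by \eqref{sec3:eq:1}, is the crux that renders the bound uniform in $t$ rather than exponentially growing; note that truncating $\varphi$ via $\tau_m^*$ is precisely what removes any need to control the moments of $\varphi$ in this estimate.
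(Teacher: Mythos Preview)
Your proposal is correct and follows essentially the same route as the paper: both apply It\^o's formula to the exponentially weighted Lyapunov function $e^{\lambda t}x(t)^p$ on the interval localised by $\tau_m^*$ and an auxiliary stopping time for $x$, bound $\varphi\le m$ on that interval, and then use the polynomial dominance $p-1+\rho>\max(p,\,p-2+2\theta)$ from \eqref{sec3:eq:1} to conclude that the generator is bounded by $K'e^{\lambda t}$. The only cosmetic difference is that the paper takes $\lambda=1$ and shows directly that $x^p+px^{p-1}f_1(x)+\tfrac{m}{2}p(p-1)g_1(x)^2$ is bounded by a constant $K_3$, whereas you first isolate the dissipative inequality $LV\le K'-\lambda V$ with $\lambda=p\alpha_1/2$ and then pass to $e^{\lambda t}V$; the resulting estimates and the handling of the stopping times and Fatou passage are otherwise identical.
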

\begin{proof}
For any sufficiently large integer $n>0$, define the stopping times 
\begin{align*}
\varsigma^*_n&=\inf\{t\ge 0: x(t)\notin (1/n,n)\}.
\end{align*}
Then set $\varrho^*_{mn}=\varsigma^*_n\wedge \tau_n^*$. For $s\in[0,t\wedge \varrho^*_{mn}]$, we apply \eqref{sec2:eq:4} to $H(x,t)=e^tx^p$ to compute
\begin{align*}
&LH(x(s),\varphi(s))\\
&=e^sx(s)^p+pe^sx(s)^{p-1}f_1(x(s))+\frac{1}{2}p(p-1)e^sx(s)^{p-2}\varphi(s)g_1(x(s))^2\\
&=e^s\big(x(s)^p+\alpha_1\mu_1 px(s)^{p-1}-\alpha_1 px(s)^{\rho+p-1}+\frac{\sigma_1}{2}p(p-1)\varphi(s)x(s)^{2\theta+p-2}\big)\\
&\le e^s\big(x(s)^p+\alpha_1\mu_1 px(s)^{p-1}-\alpha_1 px(s)^{\rho+p-1}+\frac{m\sigma_1}{2}p(p-1)x(s)^{2\theta+p-2}\big),
\end{align*}\\
By the It\^{o} formula, we get
\begin{align*}
\mathbb{E}[e^{t\wedge \varrho^*_{mn}}|x(t\wedge \varrho^*_{mn})|^p]&\le x_0^p+\mathbb{E}\int_0^{t\wedge \varrho^*_{mn}}LH(x(s),\varphi(s))ds.
\end{align*}
Noting that
\begin{align*}
\mathbb{E}[e^{t\wedge \varrho^*_{mn}}|x(t\wedge \varrho^*_{mn})|^p]&=\mathbb{E}[e^{t\wedge\varsigma^*_n\wedge \tau_n^*}|x(t\wedge \varsigma^*_n\wedge \tau_n^*)|^p]\\
&\ge \mathbb{E}[e^{t\wedge \varsigma^*_n}|x(t\wedge\varsigma^*_n)|^p1_{(t\wedge\varsigma^*_n\le \tau^*_m )}],
\end{align*}
we obtain
\begin{align*}
\mathbb{E}[e^{t\wedge \varsigma^*_n}|x(t\wedge\varsigma^*_n)|^p1_{(t\wedge\varsigma^*_n\le \tau^*_m )}]&\le x_0^p+\mathbb{E}\int_0^{t\wedge \varrho^*_{mn}}LH(x(s),\varphi(s))ds.
\end{align*}
So, by Assumption \ref{sec3:assump:1}, we can find a constant $K_3$ such that 
\begin{equation*}
\mathbb{E}[e^{t\wedge \varsigma^*_n}|x(t\wedge\varsigma^*_n)|^p1_{(t\wedge\varsigma^*_n\le \tau^*_m )}]\le x_0^p+e^tK_3
\end{equation*}
By letting $n\rightarrow \infty $, we can apply the Fatou lemma to have
\begin{equation*}
\mathbb{E}|x(t)|^p1_{(t\le \tau^*_m )}\le x_0^pe^{-t}+K_3,
\end{equation*}
and consequently,
\begin{equation*}
\sup_{0\leq t<\infty}\big(\mathbb{E}|x(t)|^p1_{(t\le \tau^*_m )}\big)\leq c_2
\end{equation*}
as the desired assertion. The proof is thus complete.
\end{proof}
\section{Numerical method}\label{sect4}
In this section, we construct the truncated EM method to approximate SDEs \eqref{sec2:eq:1} and \eqref{sec2:eq:2}. But before then, we need to introduce the following lemmas which are needed to perform the convergence analysis (see \cite{mao3}).
\begin{lemma}\label{sec4:eq:L1}
For any $R>0$, there exist positive constants $K_R$ and $L_R$ such that the coefficients of \textup{SDE} \eqref{sec2:eq:1} and \textup{SDE} \eqref{sec2:eq:2} satisfy
\begin{align}
|f_1(x)-f_1(\bar{x})|\vee |g_1(x)-g_1(\bar{x})|&\le K_R|x-\bar{x}|\label{sec4:eq:1},\\
|f_2(\varphi)-f_2(\bar{\varphi})|\vee |g_2(\varphi)-g_2(\bar{\varphi})|&\le L_R|\varphi-\bar{\varphi}|\label{sec4:eq:2}
\end{align}
for all $\varphi,\bar{\varphi} \in \mathbb{R}$ and $x,\bar{x}\in \mathbb{R}^2$ with $|x|\vee|\bar{x}|\vee|\varphi|\vee|\bar{\varphi}|\le R$.
\end{lemma}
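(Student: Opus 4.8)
The plan is to exploit the fact that, up to additive constants and multiplicative scalings, each coefficient is a single power function, and a power function whose exponent exceeds one is continuously differentiable on the positive half-line with a derivative that is bounded on every bounded set. Writing $f_1(x)-f_1(\bar x)=-\alpha_1(x^\rho-\bar x^\rho)$ and $g_1(x)-g_1(\bar x)=\sigma_1(x^\theta-\bar x^\theta)$, and analogously $f_2(\varphi)-f_2(\bar\varphi)=-\alpha_2(\varphi^r-\bar\varphi^r)$ and $g_2(\varphi)-g_2(\bar\varphi)=\sigma_2(\varphi^\phi-\bar\varphi^\phi)$, the whole assertion reduces to controlling differences of the form $|u^p-\bar u^p|$ for $p\in\{\rho,\theta,r,\phi\}$ on the set where $|u|\vee|\bar u|\le R$.

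First I would apply the mean value theorem to the map $u\mapsto u^p$: for $u,\bar u$ with $|u|\vee|\bar u|\le R$ there is an intermediate point $\xi$ with $|\xi|\le R$ such that $u^p-\bar u^p=p\,\xi^{p-1}(u-\bar u)$, whence $|u^p-\bar u^p|\le pR^{p-1}|u-\bar u|$, since $p>1$ gives $p-1>0$ so that $t\mapsto t^{p-1}$ is increasing on $[0,R]$ and thus $|\xi|^{p-1}\le R^{p-1}$. Substituting $p=\rho$ and $p=\theta$ yields $|f_1(x)-f_1(\bar x)|\le \alpha_1\rho R^{\rho-1}|x-\bar x|$ and $|g_1(x)-g_1(\bar x)|\le \sigma_1\theta R^{\theta-1}|x-\bar x|$, so that $K_R=\max\{\alpha_1\rho R^{\rho-1},\,\sigma_1\theta R^{\theta-1}\}$ establishes \eqref{sec4:eq:1}; the identical computation with $p=r$ and $p=\phi$ and $L_R=\max\{\alpha_2 r R^{r-1},\,\sigma_2\phi R^{\phi-1}\}$ establishes \eqref{sec4:eq:2}.

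The only point deserving a little care is that the exponents $\rho,\theta,r,\phi$ are real rather than integer, and that the coefficients have been extended to all of $\mathbb{R}$ by setting them to zero on the negative axis, as in the existence proofs of Lemmas \ref{sec3:eq:000} and \ref{sec3:eq:00}. On the open positive half-line the differentiability argument above applies verbatim, so I would treat the extension across the origin separately: because $p>1$, the right-hand derivative $\lim_{u\downarrow 0}pu^{p-1}=0$, so each coefficient is continuously differentiable across $u=0$ with vanishing derivative there, and for a pair with $u>0>\bar u$ one simply estimates $|u^p-0|=u\cdot u^{p-1}\le R^{p-1}u\le R^{p-1}|u-\bar u|$, which is absorbed into the same constants. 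Hence the bounds hold for all real arguments. I do not anticipate a genuine obstacle here, as the result is merely a quantitative restatement of the smoothness of monomials on compact sets; the only thing to pin down carefully is the explicit dependence of $K_R$ and $L_R$ on $R$ through the exponents.
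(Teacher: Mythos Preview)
Your argument is correct. The paper does not supply its own proof of this lemma; it merely records the local Lipschitz property as a standing hypothesis and refers the reader to \cite{mao3}. Your proof via the mean value theorem applied to $u\mapsto u^p$ is the standard way to verify local Lipschitz continuity of power nonlinearities and yields the explicit constants $K_R=\max\{\alpha_1\rho R^{\rho-1},\sigma_1\theta R^{\theta-1}\}$ and $L_R=\max\{\alpha_2 r R^{r-1},\sigma_2\phi R^{\phi-1}\}$, which is more than the paper asks for. Your treatment of the extension across the origin is the right thing to flag: since $\rho,\theta,r,\phi>1$, the one-sided derivative at $0$ vanishes and the estimate $|u^p-0|\le R^{p-1}|u-\bar u|$ for $u>0>\bar u$ closes the case; this also covers the convention used later in the paper where the drift is extended by its constant term rather than by zero, because a constant contributes nothing to the difference $f_1(x)-f_1(\bar x)$.
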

\begin{lemma}\label{sec4:eq:L2}
Let Assumption \ref{sec3:assump:1} hold. Then for any $p\geq 2$, there exist $K_4=K(p)>0$ and $K_5=K(p)>0$ such that the coefficients terms of \textup{SDE}  \eqref{sec2:eq:1} and \eqref{sec2:eq:2} fulfil
\begin{align}
xf_1(x)+\frac{p-1}{2}|\sqrt{\varphi}g_1(x)|^2&\le K_4(1+\varphi|x|^2)\label{sec4:eq:3}\\
\varphi f_2(\varphi)+\frac{p-1}{2}|g_2(\varphi)|^2&\le K_5(1+|\varphi|^2)\label{sec4:eq:4}
\end{align}
$\forall\varphi\in \mathbb{R}_+$, $\forall x\in \mathbb{R}^2_+$. See \cite{mao3} for the proof.
\end{lemma}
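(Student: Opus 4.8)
The plan is to treat each of the two inequalities as a coercivity (Khasminskii-type) bound obtained by writing out the left-hand side explicitly and showing that the sign-definite super-linear term produced by the mean-reversion drift dominates the super-linear term produced by the diffusion coefficient. Substituting the explicit forms $f_1(x)=\alpha_1(\mu_1-x^{\rho})$, $g_1(x)=\sigma_1 x^{\theta}$, $f_2(\varphi)=\alpha_2(\mu_2-\varphi^{r})$ and $g_2(\varphi)=\sigma_2\varphi^{\phi}$ turns each claim into the statement that a polynomial expression is bounded above by a quadratic, so the whole argument reduces to a growth comparison governed by Assumption \ref{sec3:assump:1}.

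For \eqref{sec4:eq:4} I would first expand
\begin{equation*}
\varphi f_2(\varphi)+\frac{p-1}{2}|g_2(\varphi)|^2=\alpha_2\mu_2\varphi-\alpha_2\varphi^{r+1}+\frac{p-1}{2}\sigma_2^2\varphi^{2\phi}.
\end{equation*}
Because $\phi>1$ and, by \eqref{sec3:eq:2}, $r+1>2\phi>2$, the negative term $-\alpha_2\varphi^{r+1}$ has strictly the largest exponent and therefore overwhelms both the diffusion term $\varphi^{2\phi}$ and the linear term as $\varphi\to\infty$. Concretely I would invoke Young's inequality in the form $\varphi^{2\phi}\le\epsilon\varphi^{r+1}+C_\epsilon$ (valid since $2\phi<r+1$) together with $\alpha_2\mu_2\varphi\le\tfrac{1}{2}\alpha_2\mu_2(\varphi^2+1)$, choose $\epsilon$ small enough that the $\varphi^{r+1}$ contributions stay non-positive, and absorb the remainders into $K_5(1+\varphi^2)$. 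Equivalently, the continuous map $\varphi\mapsto\alpha_2\mu_2\varphi-\alpha_2\varphi^{r+1}+\frac{p-1}{2}\sigma_2^2\varphi^{2\phi}-(1+\varphi^2)$ tends to $-\infty$, hence is bounded above, and any such upper bound furnishes an admissible $K_5$.

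For \eqref{sec4:eq:3} the analogous expansion gives
\begin{equation*}
xf_1(x)+\frac{p-1}{2}|\sqrt{\varphi}g_1(x)|^2=\alpha_1\mu_1 x-\alpha_1 x^{\rho+1}+\frac{p-1}{2}\sigma_1^2\,\varphi\,x^{2\theta},
\end{equation*}
and the role of \eqref{sec3:eq:1}, namely $\rho+1>2\theta$, is to let the mean-reversion term $-\alpha_1 x^{\rho+1}$ dominate the diffusion exponent $2\theta$. The natural device is again Young's inequality applied to the coupling term, splitting the $x$-power so that one factor is absorbed by $x^{\rho+1}$ while the quadratic piece is peeled off to match the right-hand side; writing $\varphi x^{2\theta}=(\varphi x^{2})\,x^{2\theta-2}$ isolates the factor $\varphi x^{2}$ appearing on the right and leaves the growth factor $x^{2\theta-2}$ to be controlled through the drift and the comparison $2\theta<\rho+1$.

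The main obstacle I anticipate is precisely this coupling term: the compensating super-linear drift $-\alpha_1 x^{\rho+1}$ carries no factor of $\varphi$, whereas the diffusion contribution $\varphi x^{2\theta}$ does, so the two cannot be matched by a single Young split in the two independent directions $x\to\infty$ and $\varphi\to\infty$ simultaneously. I would therefore expect the estimate to hinge on the regime where $\varphi$ is controlled, exactly as in the moment estimate of Lemma \ref{sec3:eq:L2}, where the stopping time keeps $\varphi\le m$ on $[0,\tau_m^*]$ and the factor $\varphi$ is replaced by $m$ before the drift is used to dominate the diffusion. Reconciling the stated bound $K_4(1+\varphi|x|^2)$ with the $\varphi x^{2\theta}$ term under $2\theta>2$ is the delicate point on which the argument turns, and I would handle it by carrying the upper bound on $\varphi$ explicitly into the constant $K_4$ before applying the Young comparison enabled by \eqref{sec3:eq:1}.
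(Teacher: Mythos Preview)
The paper does not actually prove this lemma; the entire proof is the pointer to \cite{mao3}. Your treatment of \eqref{sec4:eq:4} is correct and standard: since $r+1>2\phi$ by \eqref{sec3:eq:2}, the negative term $-\alpha_2\varphi^{r+1}$ dominates and the whole expression is bounded above on $\mathbb{R}_+$, which yields an admissible $K_5$.

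For \eqref{sec4:eq:3}, however, you have put your finger on a genuine defect in the statement itself rather than merely a difficulty in the proof. Grouping the $\varphi$-dependent terms, the inequality reads
\[
\alpha_1\mu_1 x-\alpha_1 x^{\rho+1}+\varphi\Bigl(\tfrac{p-1}{2}\sigma_1^{2}x^{2\theta}-K_4x^{2}\Bigr)\le K_4,
\]
and since $\theta>1$ one can fix any $x$ with $x^{2\theta-2}>2K_4/((p-1)\sigma_1^{2})$, making the bracket strictly positive; sending $\varphi\to\infty$ then drives the left-hand side to $+\infty$. The drift term $-\alpha_1 x^{\rho+1}$ is independent of $\varphi$ and cannot compensate. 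Hence \eqref{sec4:eq:3} cannot hold for all $(x,\varphi)\in\mathbb{R}_+\times\mathbb{R}_+$ with a single constant $K_4$, so no self-contained proof of the lemma as written is possible. Your proposed remedy, namely carrying an a~priori bound on $\varphi$ into the constant, is exactly what the paper does in practice: in Lemmas \ref{sec3:eq:L2} and \ref{sec5:eq:L2} the estimate is invoked only after the volatility process has been localised to $[1/m,m]$ by the stopping times $\tau_m^*$ and $\hbar_m$, and the resulting constants $c_2$, $c_5$ are explicitly allowed to depend on $m$. So your diagnosis and your workaround are both on target; the gap lies in the lemma's formulation, not in your argument.
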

\subsection{Numerical schemes}
To begin with, let us extend the domain of SDE \eqref{sec2:eq:2} from $\mathbb{R}_+$ to $\mathbb{R}$ and SDE \eqref{sec2:eq:1} from $\mathbb{R}^2_+$ to $\mathbb{R}^2$. We should mention that these extensions do not affect the positivity of the solutions and the local Lipschitz conditions. We define the truncated scheme by first choosing a strictly increasing continuous function $\nu:\mathbb{R}_+\rightarrow \mathbb{R}_+$ such that $\nu(r)\rightarrow\infty$ as $r\rightarrow \infty$ and 
\begin{equation}\label{sec4:eq:5}
\sup_{\vert x\vert\vee\vert\varphi\vert \le r}\Big(|f_1(x)|\vee |f_2(\varphi)|\vee g_1(x)\vee g_2(\varphi)\Big)\leq \nu(r), \quad \forall r\ge 0.
\end{equation}
Denote by $\nu^{-1}$ the inverse function of $\nu$ and we see that $\nu^{-1}$ is strictly increasing continuous function from $[\nu(0),\infty)$ to $\mathbb{R}_+$. We also choose a number $\Delta^*\in (0,1]$ and a strictly decreasing function 
$h:(0,\Delta^*]\rightarrow (0,\infty)$ such that
\begin{equation}\label{sec4:eq:6}
\quad h(\Delta^*)\ge \nu(1), \lim_{\Delta \rightarrow 0}h(\Delta)=\infty \text{ and } \Delta^{1/4}h(\Delta)\leq 1 \quad \forall \Delta \in (0,1].
\end{equation}
For any given step size $\Delta \in (0,1)$, we define the truncated functions by
\begin{equation*}
f_1^{\Delta}(x)=
\begin{cases}
  f_1\Big(x\wedge \nu^{-1}(h(\Delta))\Big), & \mbox{if $x\geq 0$ }\\
  \alpha_1\mu_1,                             & \mbox{if $x<0$},
\end{cases}
\end{equation*}
\begin{equation*}
g_1^{\Delta}(x)=
\begin{cases}
  g_1\Big(x\wedge \nu^{-1}(h(\Delta))\Big), & \mbox{if $x\geq 0$ }\\
  0,                             & \mbox{if $x<0$},
\end{cases}
\end{equation*}
\begin{equation*}
f_2^{\Delta}(\varphi)=
\begin{cases}
f_2\Big(\varphi\wedge \nu^{-1}(h(\Delta))\Big), & \mbox{if $\varphi \geq 0$ }\\
   \alpha_2\mu_2,    & \mbox{if $\varphi<0$}.
\end{cases}
\end{equation*}
and
\begin{equation*}
g_2^{\Delta}(\varphi)=
\begin{cases}
g_2\Big(\varphi\wedge \nu^{-1}(h(\Delta))\Big), & \mbox{if $\varphi \geq 0$ }\\
  0,    & \mbox{if $\varphi<0$},
\end{cases}
\end{equation*}
for $ \varphi\in \mathbb{R}$ and $x\in \mathbb{R}^2$ .
Clearly, we observe
\begin{align}\label{sec4:eq:6*}
|f_1^{\Delta}(x)|\vee |f_2^{\Delta}(\varphi)|\vee g_1^{\Delta}(x)\vee g_2^{\Delta}(\varphi)\le \nu( \nu^{-1}(h(\Delta)))=h(\Delta)
\end{align}
for $ \varphi\in \mathbb{R}$, $x\in \mathbb{R}^2$. The truncated functions $f_1^{\Delta}$ and $g_1^{\Delta}(x)$, and $f_2^{\Delta}$ and $g_2^{\Delta}$ maintain \eqref{sec4:eq:2} and \eqref{sec4:eq:3} respectively as shown in the following lemma.
\begin{lemma}\label{sec4:eq:L3}
Let Assumption \ref{sec3:assump:1} hold. Then, for all $\Delta \in (0,\Delta^*)$ and $p\geq 2$, the truncated functions satisfy
\begin{align}
xf_1^{\Delta}(x)+\frac{p-1}{2}|\sqrt{\varphi}g_1^{\Delta}(x)|^2&\le K_6(1+\varphi|x|^2)\label{sec4:eq:7},\\
\varphi f_2^{\Delta}(\varphi)+\frac{p-1}{2}|g_2^{\Delta}(\varphi)|^2&\le K_7(1+|\varphi|^2)\label{sec4:eq:8}
\end{align}
 $\forall\varphi\in \mathbb{R}_+$, $\forall x\in \mathbb{R}^2_+$, where $K_6$ and $K_7$ are independent of $\Delta$.
\end{lemma}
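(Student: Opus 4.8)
The plan is to transfer the Khasminskii-type bounds \eqref{sec4:eq:3}--\eqref{sec4:eq:4} of Lemma \ref{sec4:eq:L2} to the truncated coefficients while keeping the constants blind to $\Delta$. Write $\zeta_\Delta:=\nu^{-1}(h(\Delta))$ for the truncation level; by \eqref{sec4:eq:6} we have $h(\Delta)\ge\nu(1)$, hence $\zeta_\Delta\ge 1$ for every admissible $\Delta$. The guiding idea is that, in the region where truncation is inactive, the truncated and original coefficients coincide and there is nothing to do, whereas in the active region the coefficients are frozen at the value $\zeta_\Delta$, and one exploits the strict inequality $\zeta_\Delta<x$ (respectively $\zeta_\Delta<\varphi$) to eliminate all $\Delta$-dependence.

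First I would prove \eqref{sec4:eq:7}. Fix $\varphi\in\mathbb{R}_+$ and $x\in\mathbb{R}_+$ and split according to the truncation. If $0\le x\le\zeta_\Delta$ then $x\wedge\zeta_\Delta=x$, so $f_1^{\Delta}(x)=f_1(x)$ and $g_1^{\Delta}(x)=g_1(x)$, and \eqref{sec4:eq:7} is literally \eqref{sec4:eq:3} with $K_6=K_4$. If $x>\zeta_\Delta$ then the coefficients are frozen, $f_1^{\Delta}(x)=f_1(\zeta_\Delta)$ and $g_1^{\Delta}(x)=g_1(\zeta_\Delta)$, so that
\begin{align*}
xf_1^{\Delta}(x)+\frac{p-1}{2}\varphi|g_1^{\Delta}(x)|^2
=xf_1(\zeta_\Delta)+\frac{p-1}{2}\varphi|g_1(\zeta_\Delta)|^2.
\end{align*}
The diffusion part still carries the frozen argument $\zeta_\Delta$, so I would apply \eqref{sec4:eq:3} at the point $\zeta_\Delta$ with the same $\varphi$ to obtain $\frac{p-1}{2}\varphi|g_1(\zeta_\Delta)|^2\le K_4(1+\varphi\zeta_\Delta^2)-\zeta_\Delta f_1(\zeta_\Delta)$; substituting and regrouping gives
\begin{align*}
xf_1(\zeta_\Delta)+\frac{p-1}{2}\varphi|g_1(\zeta_\Delta)|^2
\le(x-\zeta_\Delta)f_1(\zeta_\Delta)+K_4\bigl(1+\varphi\zeta_\Delta^2\bigr).
\end{align*}
Since $\zeta_\Delta<x$, the second summand is bounded by $K_4(1+\varphi|x|^2)$, which is already of the required shape and independent of $\Delta$, so everything reduces to absorbing the cross-term $(x-\zeta_\Delta)f_1(\zeta_\Delta)$ with $f_1(\zeta_\Delta)=\alpha_1(\mu_1-\zeta_\Delta^{\rho})$.

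The estimate \eqref{sec4:eq:8} follows the same dichotomy but is cleaner, being a genuine one-dimensional Khasminskii inequality without an external multiplier. For $0\le\varphi\le\zeta_\Delta$ it collapses to \eqref{sec4:eq:4} with $K_7=K_5$; for $\varphi>\zeta_\Delta$ I would add and subtract $\zeta_\Delta f_2(\zeta_\Delta)$, invoke \eqref{sec4:eq:4} at $\zeta_\Delta$ to replace $\frac{p-1}{2}|g_2(\zeta_\Delta)|^2$ by $K_5(1+\zeta_\Delta^2)-\zeta_\Delta f_2(\zeta_\Delta)$, and then use $\zeta_\Delta<\varphi$ to upgrade $\zeta_\Delta^2$ to $\varphi^2$. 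The residual $(\varphi-\zeta_\Delta)f_2(\zeta_\Delta)$ is controlled through the sign of $\mu_2-\zeta_\Delta^{r}$ exactly as in the first inequality, and taking $K_7$ to be the larger of the two constants produced finishes the proof.

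The hard part will be the active-truncation regime of \eqref{sec4:eq:7}, where the unbounded variable $x$ multiplies the frozen drift $f_1(\zeta_\Delta)$ while Lemma \ref{sec4:eq:L2} only controls the combination in which $\zeta_\Delta$ itself multiplies $f_1(\zeta_\Delta)$; the add-and-subtract device above is precisely what repairs this mismatch, and the inequality $\zeta_\Delta<x$ is indispensable both for discarding the $\Delta$-dependence and for keeping the super-linear term $-\alpha_1\zeta_\Delta^{\rho}$ on our side. Concretely, one checks that $(x-\zeta_\Delta)f_1(\zeta_\Delta)\le 0$ whenever $\zeta_\Delta^{\rho}\ge\mu_1$, which is guaranteed for all small $\Delta$ since then $\zeta_\Delta$ is large (here \eqref{sec4:eq:6} and the super-linearity $\rho>1$ from Assumption \ref{sec3:assump:1} enter); in the complementary bounded range of $\Delta$ the factor $\zeta_\Delta$ stays below the fixed constant $\mu_1^{1/\rho}$, so the cross-term is dominated by $\alpha_1\mu_1 x$ with a coefficient free of $\Delta$, and one must then verify that this linear-in-$x$ contribution is genuinely absorbed into the right-hand side $K_6(1+\varphi|x|^2)$ — the one estimate in the argument that deserves careful bookkeeping of the constants.
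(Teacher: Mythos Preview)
Your approach matches the paper's: case-split on whether $x$ exceeds the truncation level $\zeta_\Delta=\nu^{-1}(h(\Delta))$, and in the active regime add and subtract $\zeta_\Delta f_1(\zeta_\Delta)$ so as to invoke \eqref{sec4:eq:3} at the frozen point, then use $\zeta_\Delta<x$ and $\zeta_\Delta\ge 1$ to strip out the $\Delta$-dependence. The one difference is in the cross-term $(x-\zeta_\Delta)f_1(\zeta_\Delta)$: where you split on the sign of $f_1(\zeta_\Delta)=\alpha_1(\mu_1-\zeta_\Delta^{\rho})$, the paper instead rewrites it as $\bigl(x/\zeta_\Delta-1\bigr)\,\zeta_\Delta f_1(\zeta_\Delta)$ and bounds $\zeta_\Delta f_1(\zeta_\Delta)$ by $K_4(1+\varphi\zeta_\Delta^{2})$ directly from \eqref{sec4:eq:3} (dropping the nonnegative diffusion term), thereby avoiding any appeal to the explicit form of $f_1$; both routes leave the same residual linear-in-$x$ contribution that you rightly flag as the delicate absorption step, and the paper's final passage from $xK_4(1+\varphi x)$ to $K_7(1+\varphi|x|^{2})$ treats it just as tersely.
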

\begin{proof}
See \cite{mao3} for the proof of \eqref{sec4:eq:8}.  To prove \eqref{sec4:eq:7}, fix any $\Delta\in (0,\Delta^*]$. Then for $\varphi\in \mathbb{R}$ and $x\in \mathbb{R}^2$ with $\vert x\vert\vee\vert\varphi\vert\le \nu^{-1}(h(\Delta))$, by \eqref{sec4:eq:4}, we obtain
\begin{align*}
xf_1^{\Delta}(x)+\frac{p-1}{2}|\sqrt{\varphi}g_1^{\Delta}(x)|^2&\le xf_1(x)+\frac{p-1}{2}|\sqrt{\varphi}g_1(x)|^2\\
&\le K_4(1+\varphi|x|^2)
\end{align*}
as required. For $\varphi\in \mathbb{R}$ and $x\in \mathbb{R}^2$ with $\vert x\vert\vee\vert\varphi\vert>\nu^{-1}(h(\Delta))$, we get
\begin{align*}
xf_1^{\Delta}(x)+\frac{p-1}{2}|\sqrt{\varphi}g_1^{\Delta}(x)|^2
&\le xf_1(\nu^{-1}(h(\Delta)))+ \frac{p-1}{2}|\sqrt{\varphi} g_1(\nu^{-1}(h(\Delta)))|^2\\
&\le \nu^{-1}(h(\Delta))f_1(\nu^{-1}(h(\Delta)))\\
&+ \frac{p-1}{2}|\sqrt{\nu^{-1}(h(\Delta))}g_1(\nu^{-1}(h(\Delta))|^2\\
&+\Big(\frac{x}{\nu^{-1}(h(\Delta))}-1 \Big)\nu^{-1}(h(\Delta))f_1(\nu^{-1}(h(\Delta)))\\
&\le K_4(1+\nu^{-1}(h(\Delta))[\nu^{-1}(h(\Delta))]^2)\\
&+\Big(\frac{x}{\nu^{-1}(h(\Delta))}-1\Big)\nu^{-1}(h(\Delta))f_1(\nu^{-1}(h(\Delta))).
\end{align*}
Again, we observe from \eqref{sec4:eq:3} that $xf_1(x)\le K_4(1+\varphi|x|^2)$ for any $\varphi\in\mathbb{R}$ and $x\in\mathbb{R}^2$, we obtain
\begin{align*}
xf_1^{\Delta}(x)+\frac{p-1}{2}|\sqrt{\varphi}g_1^{\Delta}(x)|^2
&\le K_4(1+\nu^{-1}(h(\Delta))[\nu^{-1}(h(\Delta))]^2)\\
&+\Big(\frac{x}{\nu^{-1}(h(\Delta))}-1\Big)K_4(1+\nu^{-1}(h(\Delta))[\nu^{-1}(h(\Delta))]^2)\\
&\le \frac{x}{\nu^{-1}(h(\Delta))}K_4(1+\nu^{-1}(h(\Delta))[\nu^{-1}(h(\Delta))]^2)\\
&\le x\cdot K_4(1+\nu^{-1}(h(\Delta))[\nu^{-1}(h(\Delta))])\\
&\le x\cdot K_5(1+\varphi\cdot x)\leq K_7(1+\varphi|x|^2),
\end{align*}
where $K_7=2K_4$ as the required assertion in \eqref{sec4:eq:7}. We should mention that using these proofs, we could similarly establish the case when $\varphi\in \mathbb{R}$ and $x\in \mathbb{R}^2$ with $\vert x\vert>\nu^{-1}(h(\Delta))$ and $\vert\varphi\vert\le \nu^{-1}(h(\Delta))$ and the case when $\varphi\in \mathbb{R}$ and $x\in \mathbb{R}^2$ with $\vert x\vert\le\nu^{-1}(h(\Delta))$ and $\vert\varphi\vert>\nu^{-1}(h(\Delta))$.
\end{proof}
Let us now form the discrete-time truncated EM solutions $Y_{\Delta}(t_k)\approx \varphi(t_k)$ and $X_{\Delta}(t_k)\approx x(t_k)$ to SDEs \eqref{sec2:eq:1} and \eqref{sec2:eq:2} for $t_k=k\Delta$ respectively, by setting $Y_{\Delta}(0)=\varphi_0$, $X_{\Delta}(0)=x_0$ and computing
\begin{align}
Y_{\Delta}(t_{k+1})&=Y_{\Delta}(t_k)+f_2^{\Delta}(Y_{\Delta}(t_k))\Delta+g_2^{\Delta}(Y_{\Delta}(t_{k}))\Delta B_{2k}\label{sec4:eq:9}\\
X_{\Delta}(t_{k+1})&=X_{\Delta}(t_k)+f_1^{\Delta}(X_{\Delta}(t_k))\Delta+\sqrt{|Y_{\Delta}(t_k)|}g_1^{\Delta}(X_{\Delta}(t_{k}))\Delta B_{1k}\label{sec4:eq:10}
\end{align}
for $k=0,1,2,\cdots,$ where $\Delta=t_{k+1}-t_k$, $\Delta B_{1k}=(B_1(t_{k+1})-B_1(t_k))$ and $\Delta B_{2k}=(B_2(t_{k+1})-B_2(t_k))$. Let us now form corresponding versions of the continuous-time truncated EM solutions. The first versions are defined by
\begin{align}
\bar{\varphi}_{\Delta}(t)&=\sum_{k=0}^{\infty}Y_{\Delta}(t_k)1_{[t_k,t_{k+1})}(t)\label{sec4:eq:10}\\
\bar{x}_{\Delta}(t)&=\sum_{k=0}^{\infty}X_{\Delta}(t_k)1_{[t_k,t_{k+1})}(t)\label{sec4:eq:11}.
\end{align}
on $t\ge0$. These are the continuous-time step processes. The other versions are the continuous-time continuous processes defined on $t\ge 0$ by
\begin{align}
\varphi_{\Delta}(t)=\varphi(0)+\int_0^t f_2^{\Delta}(\bar{\varphi}_{\Delta}(s))ds+\int_0^t g_2^{\Delta}(\bar{\varphi}_{\Delta}(s))dB_2(s)\label{sec4:eq:12}\\
x_{\Delta}(t)=x(0)+\int_0^t f_1^{\Delta}(\bar{x}_{\Delta}(s))ds+\int_0^t \sqrt{|\bar{\varphi}(s)|} g_1^{\Delta}(\bar{x}_{\Delta}(s))dB_1(s)\label{sec4:eq:13}.
\end{align}
Obviously $\varphi_{\Delta}(t)$ and $x_{\Delta}(t)$ are It\^{o} processes on $t\ge 0$ respectively satisfying It\^{o} differentials
\begin{align*}
d\varphi_{\Delta}(t)= f_2^{\Delta}(\bar{\varphi}_{\Delta}(t))dt+g_2^{\Delta}(\bar{\varphi}_{\Delta}(t))dB_2(t)\\
dx_{\Delta}(t)= f_1^{\Delta}(\bar{x}_{\Delta}(t))dt+\sqrt{|\bar{\varphi}_{\Delta}(t)|}g_1^{\Delta}(\bar{x}_{\Delta}(t))dB_1(t).
\end{align*}
For all $k\ge 0$, we clearly observe that $\varphi_{\Delta}(t_{k})=\bar{\varphi}_{\Delta}(t_k)=Y_{\Delta}(t_k)$ and $x_{\Delta}(t_{k})=\bar{x}_{\Delta}(t_k)=X_{\Delta}(t_k)$.
\section{Numerical properties}\label{sect5}
In this section, we establish the moment bounds and finite time strong convergence results for the truncated EM solutions. 
\subsection{Finite moments}
In the sequel, let us recall the following useful lemmas. The proofs of these lemmas are in \cite{mao3} and therefore omitted.
\begin{lemma}\label{sec5:eq:L1}
Let equation \eqref{sec3:eq:2} hold. Then for any $p\ge 2$, the solution of \eqref{sec4:eq:12} satisfies
\begin{equation}\label{sec5:eq:1}
\sup_{0<\Delta\le\Delta^*}\sup_{0\le t\le T}\big(\mathbb{E}\vert \varphi_{\Delta}(t)\vert^p\big)\le c_4, 
\end{equation}
$\forall T\ge 0$ where $c_4:=c_4(\varphi_0,p,T,K_7)$ may change between occurrences. 
\end{lemma}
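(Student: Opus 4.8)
The plan is to work with the continuous-time continuous process $\varphi_\Delta(t)$ from \eqref{sec4:eq:12}, apply the It\^{o} formula to $U(\varphi)=|\varphi|^p$ (which is $C^2$ on $\mathbb{R}$ for $p\ge 2$), take expectations, and close a Gronwall inequality with constants independent of $\Delta$. The reason to track $\varphi_\Delta(t)$ rather than the step process $\bar\varphi_\Delta(t)$ is that only the former is a genuine It\^{o} process satisfying the stated differential; the price to pay is that its drift and diffusion are frozen at $\bar\varphi_\Delta$, and reconciling this coefficient-argument mismatch is exactly where the truncation calibration must be used. Recall also that the scheme runs on the extended domain $\mathbb{R}$, so $\varphi_\Delta$ may take negative values and the absolute-value functional $|\varphi|^p$ is the appropriate choice; if one prefers a globally smooth test function, $(1+\varphi^2)^{p/2}$ works identically and is comparable to $1+|\varphi|^p$.

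First I would establish an elementary one-step bound on the gap between the continuous and step processes. For $t\in[t_k,t_{k+1})$ one has $\varphi_\Delta(t)-\bar\varphi_\Delta(t)=f_2^\Delta(Y_\Delta(t_k))(t-t_k)+g_2^\Delta(Y_\Delta(t_k))(B_2(t)-B_2(t_k))$, so the uniform coefficient bound \eqref{sec4:eq:6*}, namely $|f_2^\Delta|\vee g_2^\Delta\le h(\Delta)$, together with the $p$-th moment of a Brownian increment, gives $\mathbb{E}|\varphi_\Delta(t)-\bar\varphi_\Delta(t)|^p\le C_p\,h(\Delta)^p\Delta^{p/2}$ for $\Delta\le 1$. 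The calibration $\Delta^{1/4}h(\Delta)\le 1$ from \eqref{sec4:eq:6} yields $h(\Delta)\le\Delta^{-1/4}$, whence $h(\Delta)^p\Delta^{p/2}\le\Delta^{p/4}\le 1$; this is the quantitative statement that renders every error term harmless uniformly in $\Delta$.

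Next I would apply It\^{o} to $|\varphi_\Delta(t)|^p$ and take expectations, killing the martingale part. Factoring $p|\varphi_\Delta|^{p-2}$ out of the generator leaves $p|\varphi_\Delta|^{p-2}\big[\varphi_\Delta f_2^\Delta(\bar\varphi_\Delta)+\tfrac{p-1}{2}|g_2^\Delta(\bar\varphi_\Delta)|^2\big]$, and the key manipulation is to write $\varphi_\Delta=\bar\varphi_\Delta+(\varphi_\Delta-\bar\varphi_\Delta)$ inside the first product. The part that retains $\bar\varphi_\Delta$ in the matching slot is precisely the left-hand side of the truncated Khasminskii-type inequality \eqref{sec4:eq:8}, so it is bounded by $K_7(1+|\bar\varphi_\Delta|^2)$; replacing $|\bar\varphi_\Delta|^2$ by $2|\varphi_\Delta|^2+2|\varphi_\Delta-\bar\varphi_\Delta|^2$ and invoking Young's inequality converts $p|\varphi_\Delta|^{p-2}K_7(1+|\bar\varphi_\Delta|^2)$ into $C(1+|\varphi_\Delta|^p)$ plus a remainder controlled by $\mathbb{E}|\varphi_\Delta-\bar\varphi_\Delta|^p\le C$. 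The genuinely leftover cross term is $p|\varphi_\Delta|^{p-2}(\varphi_\Delta-\bar\varphi_\Delta)f_2^\Delta(\bar\varphi_\Delta)$, which I would bound using $|f_2^\Delta(\bar\varphi_\Delta)|\le h(\Delta)$ and then split by Young's inequality into an absorbable multiple of $|\varphi_\Delta|^p$ (vacuous when $p=2$) and a remainder $|\varphi_\Delta-\bar\varphi_\Delta|^{p/2}h(\Delta)^{p/2}$; by Cauchy--Schwarz and step two its expectation is $\le C\,h(\Delta)^p\Delta^{p/4}\le C$. Collecting terms gives $\mathbb{E}|\varphi_\Delta(t)|^p\le|\varphi_0|^p+C\int_0^t\big(1+\mathbb{E}|\varphi_\Delta(s)|^p\big)\,ds$ with $C$ independent of $\Delta$.

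Finally, the Gronwall inequality yields $\sup_{0\le t\le T}\mathbb{E}|\varphi_\Delta(t)|^p\le(|\varphi_0|^p+CT)e^{CT}=:c_4$, uniformly over $\Delta\in(0,\Delta^*]$, which is the assertion. The main obstacle, and the only place the truncation structure is really used, is controlling the coefficient-argument mismatch between $\varphi_\Delta$ and $\bar\varphi_\Delta$ with constants that do not blow up as $\Delta\to 0$: the bound $|f_2^\Delta|\vee g_2^\Delta\le h(\Delta)$ lets $h(\Delta)$ grow, but only slowly enough that, thanks to $\Delta^{1/4}h(\Delta)\le 1$, it is absorbed by the positive powers of $\Delta$ coming from the time- and Brownian-increment estimates.
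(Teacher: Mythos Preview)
Your proposal is correct and follows essentially the same route as the paper. The paper omits the proof of this lemma, citing \cite{mao3}, but the argument you outline---apply It\^{o} to $|\varphi_\Delta|^p$, split $\varphi_\Delta f_2^\Delta(\bar\varphi_\Delta)$ into the matched term $\bar\varphi_\Delta f_2^\Delta(\bar\varphi_\Delta)$ (handled by the truncated Khasminskii inequality \eqref{sec4:eq:8}) plus the cross term $(\varphi_\Delta-\bar\varphi_\Delta)f_2^\Delta(\bar\varphi_\Delta)$ (handled by $|f_2^\Delta|\le h(\Delta)$, the one-step bound, and the calibration $\Delta^{1/4}h(\Delta)\le 1$), then Gronwall---is exactly the method of \cite{mao3} and is replicated almost verbatim in the paper's own proof of the companion Lemma~\ref{sec5:eq:L2} for $x_\Delta$, where the same decomposition appears as $\mathcal{J}_1$ and $\mathcal{J}_2$.
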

It is important to note that \eqref{sec5:eq:1} also holds for $\bar{\varphi}_{\Delta}(t)$ because $\varphi_{\Delta}(t_{k})$ and $\bar{\varphi}_{\Delta}(t_k)$ coincide at discrete time $t_k$ for all $k\ge 0$.
\begin{lemma}\label{sec5:eq:L1**}
For any $\Delta\in(0,\Delta^*]$ and $\forall t\ge0$, we have 
\begin{equation}\label{sec5:eq:2}
\mathbb{E}\vert \varphi_{\Delta}(t)-\bar{\varphi}_{\Delta}(t)\vert^p \le c_p\Delta^{p/2} (h(\Delta))^p
\end{equation}
and consequently, 
\begin{equation}\label{sec5:eq:3}
\lim_{\Delta\rightarrow 0}\mathbb{E}\vert \varphi_{\Delta}(t)-\bar{\varphi}_{\Delta}(t)\vert^p=0,
\end{equation}
where $c_p$ is a positive constant which depends only on $p$.
\end{lemma}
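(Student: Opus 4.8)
The plan is to exploit the fact that, by \eqref{sec4:eq:6*}, the truncated coefficients are uniformly bounded by $h(\Delta)$, so the estimate follows without invoking any moment bound on $\bar{\varphi}_\Delta$. First I would fix $t\ge 0$ and let $k\ge 0$ be the unique integer with $t\in[t_k,t_{k+1})$. Since $\bar{\varphi}_\Delta(t)=Y_\Delta(t_k)=\varphi_\Delta(t_k)$ on this subinterval, the integral representation \eqref{sec4:eq:12} gives
\begin{equation*}
\varphi_\Delta(t)-\bar{\varphi}_\Delta(t)=\int_{t_k}^{t}f_2^\Delta(\bar{\varphi}_\Delta(s))\,ds+\int_{t_k}^{t}g_2^\Delta(\bar{\varphi}_\Delta(s))\,dB_2(s).
\end{equation*}
Applying the elementary inequality $|a+b|^p\le 2^{p-1}(|a|^p+|b|^p)$ then reduces the problem to bounding the drift and diffusion contributions separately.

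For the drift term I would use the H\"older inequality together with the uniform bound $|f_2^\Delta|\le h(\Delta)$ from \eqref{sec4:eq:6*} to get
\begin{equation*}
\mathbb{E}\Big|\int_{t_k}^{t}f_2^\Delta(\bar{\varphi}_\Delta(s))\,ds\Big|^p\le (t-t_k)^{p-1}\mathbb{E}\int_{t_k}^{t}|f_2^\Delta(\bar{\varphi}_\Delta(s))|^p\,ds\le \Delta^{p}(h(\Delta))^p.
\end{equation*}
For the diffusion term I would invoke the Burkholder--Davis--Gundy inequality followed again by $|g_2^\Delta|\le h(\Delta)$, which yields
\begin{equation*}
\mathbb{E}\Big|\int_{t_k}^{t}g_2^\Delta(\bar{\varphi}_\Delta(s))\,dB_2(s)\Big|^p\le C_p\,\mathbb{E}\Big(\int_{t_k}^{t}|g_2^\Delta(\bar{\varphi}_\Delta(s))|^2\,ds\Big)^{p/2}\le C_p\,\Delta^{p/2}(h(\Delta))^p,
\end{equation*}
where $C_p$ is the BDG constant. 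Since $\Delta\in(0,1]$ and $p\ge 2$ give $\Delta^{p}\le\Delta^{p/2}$, the drift contribution is absorbed into the diffusion rate, and collecting constants produces \eqref{sec5:eq:2} with some $c_p$ depending only on $p$.

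Finally, the limit \eqref{sec5:eq:3} follows immediately from the standing requirement $\Delta^{1/4}h(\Delta)\le 1$ in \eqref{sec4:eq:6}: this gives $h(\Delta)\le\Delta^{-1/4}$, hence $\Delta^{p/2}(h(\Delta))^p\le\Delta^{p/4}\to 0$ as $\Delta\to 0$. As for difficulty, I expect essentially none of substance here: the entire point of the truncation is that the coefficients are globally bounded by $h(\Delta)$, so the usual one-step Euler estimate goes through verbatim and no moment control of $\bar{\varphi}_\Delta$ (that is, Lemma~\ref{sec5:eq:L1}) is required. The only place to be careful is matching the powers of $\Delta$ so that the BDG step delivers the claimed $\Delta^{p/2}$ rate rather than a worse one, and verifying that the drift term is genuinely of higher order in $\Delta$ so that it cannot spoil the bound.
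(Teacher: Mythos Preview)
Your proof is correct and follows precisely the standard argument the paper invokes: the paper omits the proof of this lemma, citing \cite{mao3}, but its own proof of the companion Lemma~\ref{sec5:eq:L1***} for $x_\Delta$ uses exactly the decomposition, H\"older/BDG bounds, and truncation estimate \eqref{sec4:eq:6*} that you describe, together with the observation $\Delta^{p/2}(h(\Delta))^p\le\Delta^{p/4}$ for the limit. There is nothing to add.
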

In addition to the above lemmas, we also need the following lemmas.
\begin{lemma}\label{sec5:eq:L1***}
For any $\Delta\in(0,\Delta^*]$ and $\forall t\ge0$, we have 
\begin{equation}\label{sec5:eq:4}
\mathbb{E}\vert x_{\Delta}(t)-\bar{x}_{\Delta}(t)\vert^p \le C_p\Delta^{p/2} (h(\Delta))^p
\end{equation}
and consequently, 
\begin{equation}\label{sec5:eq:5}
\lim_{\Delta\rightarrow 0}\mathbb{E}\vert x_{\Delta}(t)-\bar{x}_{\Delta}(t)\vert^p=0,
\end{equation}
where $C_p$ is a positive constant which depends only on $p$.
\end{lemma}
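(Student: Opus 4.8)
The plan is to mirror the argument for Lemma~\ref{sec5:eq:L1**}, exploiting that $x_\Delta$ and $\bar x_\Delta$ agree at the grid points. Fix $t\ge 0$ and let $k\ge 0$ be the integer with $t\in[t_k,t_{k+1})$, so that $\bar x_\Delta(t)=X_\Delta(t_k)=x_\Delta(t_k)$. Subtracting this from the continuous process \eqref{sec4:eq:13} gives
\begin{equation*}
x_\Delta(t)-\bar x_\Delta(t)=\int_{t_k}^{t} f_1^\Delta(\bar x_\Delta(s))\,ds+\int_{t_k}^{t}\sqrt{|\bar\varphi_\Delta(s)|}\,g_1^\Delta(\bar x_\Delta(s))\,dB_1(s).
\end{equation*}
Using the elementary inequality $|a+b|^p\le 2^{p-1}(|a|^p+|b|^p)$, I would estimate the drift and diffusion contributions separately.

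For the drift term, the uniform truncation bound \eqref{sec4:eq:6*} gives $|f_1^\Delta(\bar x_\Delta(s))|\le h(\Delta)$, so that the time integral over an interval of length at most $\Delta$ is dominated by $\Delta h(\Delta)$; raising to the $p$-th power and using $\Delta\le 1$ (hence $\Delta^p\le\Delta^{p/2}$) yields a bound of the form $\Delta^{p/2}(h(\Delta))^p$ after taking expectations.

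The diffusion term is the crux, since the coefficient carries the extra factor $\sqrt{|\bar\varphi_\Delta(s)|}$. Here I would apply the Burkholder--Davis--Gundy inequality to pass to the quadratic variation, then use $g_1^\Delta(\bar x_\Delta(s))\le h(\Delta)$ from \eqref{sec4:eq:6*} to pull out $(h(\Delta))^p$, leaving
\begin{equation*}
\mathbb{E}\Big(\int_{t_k}^{t}|\bar\varphi_\Delta(s)|\,ds\Big)^{p/2}.
\end{equation*}
An application of H\"older's inequality in time on an interval of length at most $\Delta$ bounds this by $\Delta^{p/2-1}\int_{t_k}^{t}\mathbb{E}|\bar\varphi_\Delta(s)|^{p/2}\,ds$. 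The moment $\mathbb{E}|\bar\varphi_\Delta(s)|^{p/2}$ is uniformly bounded by Lemma~\ref{sec5:eq:L1} (directly when $p\ge 4$, and via Lyapunov's inequality from the second-moment bound when $2\le p<4$), so the whole expression is controlled by a constant times $\Delta^{p/2}$. Combining with the prefactor $(h(\Delta))^p$ gives the diffusion contribution a bound of order $\Delta^{p/2}(h(\Delta))^p$ as well.

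Adding the two estimates establishes the first inequality with a constant $C_p$ depending only on $p$. The limit \eqref{sec5:eq:5} then follows immediately from the standing requirement $\Delta^{1/4}h(\Delta)\le 1$ in \eqref{sec4:eq:6}: writing $\Delta^{p/2}(h(\Delta))^p=\Delta^{p/4}\big(\Delta^{1/4}h(\Delta)\big)^p\le\Delta^{p/4}$, which tends to $0$ as $\Delta\to 0$. The main obstacle is the handling of the $\sqrt{|\bar\varphi_\Delta(s)|}$ weight in the stochastic integral; everything else is routine once the uniform moment bound for $\varphi_\Delta$ and the truncation bound \eqref{sec4:eq:6*} are in hand.
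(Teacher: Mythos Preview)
Your proposal is correct and follows essentially the same route as the paper's own proof: split the increment into drift and diffusion parts, bound the drift via $|f_1^\Delta|\le h(\Delta)$, and for the diffusion apply BDG together with H\"older in time plus the uniform moment bound on $\bar\varphi_\Delta$ from Lemma~\ref{sec5:eq:L1}. The only cosmetic difference is that the paper writes the moment inequality for the stochastic integral in one step as $\Delta^{(p-2)/2}\mathbb{E}\int_{t_k}^{t}|\sqrt{|\bar\varphi_\Delta(s)|}\,g_1^\Delta(\bar x_\Delta(s))|^p\,ds$, whereas you separate the BDG and H\"older-in-time steps; your explicit remark about invoking Lyapunov for $2\le p<4$ is a welcome clarification that the paper leaves implicit.
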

\begin{proof}
Fix any $\Delta\in (0,\Delta^*]$ and $t\ge 0$. Then there is a unique integer $k\ge 0$ such that $t_k\le t\le t_{k+1}$. By elementary inequality, we derive
\begin{align*}
&\mathbb{E}\vert x_{\Delta}(t)-\bar{x}_{\Delta}(t)\vert^p=\mathbb{E}\vert x_{\Delta}(t)-\bar{x}_{\Delta}(t_k)\vert^p\\
&\le c(p)\Big(\mathbb{E}\big\vert\int_{t_k}^tf_1^{\Delta}(\bar{x}_{\Delta}(s))ds\big\vert^p+\mathbb{E}\big\vert\int_{t_k}^t\sqrt{|\bar{\varphi}(s)|}g_1^{\Delta}(\bar{x}_{\Delta}(s))dB(s)\big\vert^p\Big)\\
&\le c(p)\Big(\Delta^{p-1}\mathbb{E}\int_{t_k}^t\vert f_1^{\Delta}(\bar{x}_{\Delta}(s))\vert^p ds+\Delta^{(p-2)/2}\mathbb{E}\int_{t_k}^t\vert \sqrt{|\bar{\varphi}(s)|}g_1^{\Delta}(\bar{x}_{\Delta}(s))\vert^p ds\Big).
\end{align*}
So by Lemma \eqref{sec5:eq:L1} and \eqref{sec4:eq:6*}, we have 
\begin{align*}
\mathbb{E}\vert x_{\Delta}(t)-\bar{x}_{\Delta}(t)\vert^p
&\le c(p)\Big(\Delta^{p-1}(h(\Delta))^p\Delta+|c_4|^{p/2}\Delta^{(p-2)/2}(h(\Delta))^p\Delta\Big)\\
&\le c(p)\Big(\Delta^{p}(h(\Delta))^p+|c_4|^{p/2}\Delta^{p/2}(h(\Delta))^p\Big)\\
&\le C_p\Delta^{p/2}(h(\Delta))^p,
\end{align*}
where $C_p=c(p)(1\vee |c_4|^{p/2})$. Nothing that $\Delta^{p/2}(h(\Delta))^p\le \Delta^{p/4}$ from \eqref{sec4:eq:6}, we obtain \eqref{sec5:eq:6} from \eqref{sec5:eq:5} by letting $\Delta\rightarrow 0$.
\end{proof}
\begin{lemma}\label{sec5:eq:L2}
Let Assumption \ref{sec3:assump:1}  hold. Then for any $p\ge 2$, the truncated EM solution of \eqref{sec5:eq:1} satisfies
\begin{equation}\label{sec5:eq:6}
\sup_{0\leq t< \infty}\big(\mathbb{E}|x_{\Delta}(t)|^p1_{(t\le \hbar^*_m)}\big)\le c_5,
\end{equation}
where for any sufficiently large integer $m>0$,
\begin{align*}
\hbar_m&=\inf\{t\ge 0: \bar{\varphi}(t)\notin (1/m,m)\}
\end{align*}
and $c_5:=c_5(x_0,\varphi_0,p,T,K_6,m)$ may change value between occurrences.
\end{lemma}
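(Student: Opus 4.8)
The plan is to mirror the proof of Lemma~\ref{sec3:eq:L2}, now working with the continuous-time truncated process $x_\Delta$ from \eqref{sec4:eq:13} and the truncated coefficients in place of the exact solution and its coefficients. Fix the integer $m>0$ entering $\hbar_m$. For a large integer $n>0$ I would introduce the localising stopping time $\varsigma_n=\inf\{t\ge0:|x_\Delta(t)|\ge n\}$, set $\varrho_{mn}=\varsigma_n\wedge\hbar_m$, and apply the It\^o formula to $H(x,t)=e^t|x|^p$ along $x_\Delta$ up to $t\wedge\varrho_{mn}$. On $[0,\varrho_{mn}]$ the integrand of the stochastic integral is bounded (since $|x_\Delta|\le n$, $\bar\varphi_\Delta\le m$ and $g_1^\Delta\le h(\Delta)$ by \eqref{sec4:eq:6*}), so that integral is a martingale and drops out, leaving
\begin{equation*}
\mathbb{E}\big[e^{t\wedge\varrho_{mn}}|x_\Delta(t\wedge\varrho_{mn})|^p\big]=|x_0|^p+\mathbb{E}\int_0^{t\wedge\varrho_{mn}}LH\big(x_\Delta(s),\bar\varphi_\Delta(s)\big)\,ds .
\end{equation*}
Exactly as in Lemma~\ref{sec3:eq:L2}, I would recover the indicator from the left-hand side via $\mathbb{E}[e^{t\wedge\varrho_{mn}}|x_\Delta(t\wedge\varrho_{mn})|^p]\ge\mathbb{E}[e^{t\wedge\varsigma_n}|x_\Delta(t\wedge\varsigma_n)|^p 1_{(t\wedge\varsigma_n\le\hbar_m)}]$; since for fixed $\Delta$ the truncated process cannot explode, $\varsigma_n\to\infty$, and the Fatou lemma will leave $e^t\,\mathbb{E}[|x_\Delta(t)|^p 1_{(t\le\hbar_m)}]$ on the left as $n\to\infty$.

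By \eqref{sec2:eq:3} the integrand expands as
\begin{equation*}
LH=e^s\Big(|x_\Delta|^p+p|x_\Delta|^{p-2}x_\Delta f_1^\Delta(\bar x_\Delta)+\tfrac12 p(p-1)|x_\Delta|^{p-2}\bar\varphi_\Delta\,(g_1^\Delta(\bar x_\Delta))^2\Big),
\end{equation*}
where every coefficient is frozen at the step value $\bar x_\Delta(s)$. The device is to split $x_\Delta f_1^\Delta(\bar x_\Delta)=\bar x_\Delta f_1^\Delta(\bar x_\Delta)+(x_\Delta-\bar x_\Delta)f_1^\Delta(\bar x_\Delta)$, so that the combination $\bar x_\Delta f_1^\Delta(\bar x_\Delta)+\tfrac{p-1}{2}\bar\varphi_\Delta(g_1^\Delta(\bar x_\Delta))^2$ is exactly the quantity controlled by the truncated Khasminskii-type inequality \eqref{sec4:eq:7} of Lemma~\ref{sec4:eq:L3}, hence at most $K_6(1+\bar\varphi_\Delta|\bar x_\Delta|^2)$ with $K_6$ independent of $\Delta$. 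Using $\bar\varphi_\Delta(s)\le m$ on $\{s\le\hbar_m\}$ and Young's inequality, this contribution to $LH$ is dominated by a linear combination of $1$, $|x_\Delta|^p$ and $|\bar x_\Delta|^p$ with coefficients depending only on $p$, $K_6$ and $m$.

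The main obstacle is the frozen-coefficient discrepancy carried by the residual term $p|x_\Delta|^{p-2}(x_\Delta-\bar x_\Delta)f_1^\Delta(\bar x_\Delta)$. I would bound it with $|f_1^\Delta|\le h(\Delta)$ from \eqref{sec4:eq:6*} and Young's inequality (conjugate exponents $p/(p-2)$ and $p/2$ when $p>2$, and directly when $p=2$), producing a multiple of $|x_\Delta|^p$ plus the power $(|x_\Delta-\bar x_\Delta|\,h(\Delta))^{p/2}$; taking expectations and applying Lemma~\ref{sec5:eq:L1***} together with Jensen's inequality yields $\mathbb{E}(|x_\Delta-\bar x_\Delta|\,h(\Delta))^{p/2}\le C(\Delta^{1/4}h(\Delta))^p\le C$, uniform in $\Delta$ by \eqref{sec4:eq:6}, and the same estimates control $\mathbb{E}[|\bar x_\Delta(s)|^p 1_{(s\le\hbar_m)}]$ by $2^{p-1}\mathbb{E}[|x_\Delta(s)|^p 1_{(s\le\hbar_m)}]$ plus a $\Delta$-uniform constant. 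Writing $\xi(t)=\mathbb{E}[|x_\Delta(t)|^p 1_{(t\le\hbar_m)}]$ and collecting everything, the inequality reduces to
\begin{equation*}
e^t\xi(t)\le|x_0|^p+\int_0^t e^s\big(C_1+C_2\,\xi(s)\big)\,ds ,
\end{equation*}
with $C_1,C_2$ independent of $\Delta$, and the Gronwall inequality applied to $e^t\xi(t)$ bounds $\xi(t)$ on each finite horizon $[0,T]$, furnishing the asserted constant $c_5=c_5(x_0,\varphi_0,p,T,K_6,m)$ and hence the stated supremum. Throughout, the two points to verify are that every constant produced by Lemmas~\ref{sec4:eq:L3} and~\ref{sec5:eq:L1***} and by the Young steps is independent of $\Delta$ (guaranteed by \eqref{sec4:eq:6} and \eqref{sec4:eq:7}), and that the limit $n\to\infty$ may be passed under the time integral, which follows from the finiteness of the moments of $x_\Delta$ for each fixed $\Delta$.
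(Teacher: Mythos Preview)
Your proposal is correct and follows essentially the same route as the paper: the same localisation by $\varsigma_n\wedge\hbar_m$, the same splitting $x_\Delta f_1^\Delta(\bar x_\Delta)=\bar x_\Delta f_1^\Delta(\bar x_\Delta)+(x_\Delta-\bar x_\Delta)f_1^\Delta(\bar x_\Delta)$ so that Lemma~\ref{sec4:eq:L3} handles the first piece while $|f_1^\Delta|\le h(\Delta)$ together with Lemma~\ref{sec5:eq:L1***} and \eqref{sec4:eq:6} handles the second, and then Gronwall. The only cosmetic differences are that the paper applies It\^o to $|x|^p$ rather than $e^t|x|^p$ and runs Gronwall before sending $n\to\infty$ (absorbing $|\bar x_\Delta|^p$ directly into $\sup_{u\le s}\mathbb{E}|x_\Delta(u\wedge\eth_{mn})|^p$), whereas you pass to the limit first and control $|\bar x_\Delta|^p$ via the one-step error; both orderings are valid here.
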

\begin{proof}
Fix any $\Delta \in (0,\Delta^*)$ and for every sufficiently large integer $n>0$, define 
\begin{align*}
\hbar^*_n&=\inf\{t\ge 0: x_{\Delta}(t)\notin (1/n,n)\}.
\end{align*}
Now set $\eth_{mn}=\hbar_m\wedge \hbar^*_n$. By the It\^{o} formula, we derive from \eqref{sec4:eq:13} that
\begin{align*}
&\mathbb{E}|x_{\Delta}(t\wedge\eth_{mn})|^p-|x_0|^p\\
&\le \mathbb{E}\int_{0}^{t\wedge\hbar_{mn}}p|x_{\Delta}(s)|^{p-2}\Big(x_{\Delta}(s)f_1^{\Delta}(\bar{x}_{\Delta}(s))
+ \frac{p-1}{2}|\sqrt{|\bar{\varphi}(s)|}g_1^{\Delta}(\bar{x}_{\Delta}(s))|^2\Big)ds\\
&=\mathcal{J}_1+\mathcal{J}_2,
\end{align*}
where
\begin{align*}
\mathcal{J}_1&=\mathbb{E}\int_{0}^{t\wedge\eth_{mn}}p|x_{\Delta}(s)|^{p-2}\Big(\bar{x}_{\Delta}(s)f_1^{\Delta}(\bar{x}_{\Delta}(s))+ \frac{p-1}{2}|\sqrt{|\bar{\varphi}(s)|}g_1^{\Delta}(\bar{x}_{\Delta}(s))|^2\Big)ds\\
\mathcal{J}_2&=\mathbb{E}\int_{0}^{t\wedge\eth_{mn}}p|x_{\Delta}(s)|^{p-2}(x_{\Delta}(s)-\bar{x}_{\Delta}(s))f_1^{\Delta}(\bar{x}_{\Delta}(s))ds.
\end{align*}
By the Young inequality, we have 
\begin{align*}
\mathcal{J}_1&=K_6\mathbb{E}\int_{0}^{t\wedge\eth_{mn}}|x_{\Delta}(s)|^{p-2}(1+\bar{\varphi}_{\Delta}(s)|\bar{x}_{\Delta}(s)|^2)ds\\
&\le K_6\mathbb{E}\int_{0}^{t\wedge\eth_{mn}}p\Big(|x_{\Delta}(s)|^{(p-2)\frac{p}{p-2}}\Big)^{\frac{p-2}{p}}\Big((1+|\bar{\varphi}_{\Delta}(s)|^{\frac{p}{2}}|\bar{x}_{\Delta}(s)|)^p\Big)^{\frac{2}{p}}ds\\
&\le K_6\mathbb{E}\int_{0}^{t\wedge\eth_{mn}}\Big((p-2)|x_{\Delta}(s)|^{p}+2(1+|\bar{\varphi}_{\Delta}(s)|^{\frac{p}{2}}|\bar{x}_{\Delta}(s)|^{p})\Big)ds\\
&\le K_6\mathbb{E}\int_{0}^{t\wedge\eth_{mn}}\Big((p-2)|x_{\Delta}(s)|^{p}+2+2m^{\frac{p}{2}}|\bar{x}_{\Delta}(s)|^{p})\Big)ds\\
&\leq r_1\mathbb{E}\int_{0}^{t\wedge\eth_{mn}}(|x_{\Delta}(s)|^{p}+\bar{x}_{\Delta}(s)|^{p})ds,
\end{align*}
where $r_1=K_6[2T+((p-2)\vee 2m^{\frac{p}{2}})]$. Also, by Lemma \ref{sec5:eq:L1}, we have
\begin{align*}
\mathcal{J}_2 &\le \mathbb{E}\int_{0}^{t\wedge\eth_{mn}}p|x_{\Delta}(s)|^{p-2}(x_{\Delta}(s)-\bar{x}_{\Delta}(s))f_1^{\Delta}(\bar{x}_{\Delta}(s))ds\\
   &\leq p\mathbb{E}\Big(\int_{0}^{t\wedge\eth_{mn}} |x_{\Delta}(s)|^{(p-2)\times \frac{p}{(p-2)}}ds\Big)^{\frac{(p-2)}{p}}\Big(\int_{0}^{t\wedge\eth_{mn}}(x_{\Delta}(s)-\bar{x}_{\Delta}(s))^{1\times \frac{p}{2}}f_1^{\Delta}(\bar{x}_{\Delta}(s))^{1\times \frac{p}{2}}ds\Big)^{\frac{2}{p}}\\
   &\leq p\mathbb{E}\Big(\int_{0}^{t\wedge\eth_{mn}} |x_{\Delta}(s)|^{p}ds\Big)^{\frac{(p-2)}{p}}\Big(\int_{0}^{t\wedge\eth_{mn}}(x_{\Delta}(s)-\bar{x}_{\Delta}(s))^{\frac{p}{2}}(f_1^{\Delta}(\bar{x}_{\Delta}(s))^{\frac{p}{2}}ds\Big)^{\frac{2}{p}}.\\
 &\leq (p-2)\mathbb{E}\int_{0}^{t\wedge\eth_{mn}}|x_{\Delta}(s)|^{p}ds+2\int_{0}^{T}\Big(\mathbb{E}|x_{\Delta}(s)-\bar{x}_{\Delta}(s)|f_1^{\Delta}(\bar{x}_{\Delta}(s))\Big)^{\frac{p}{2}}ds\\
&\le (p-2)\mathbb{E}\int_{0}^{t\wedge\eth_{mn}}|x_{\Delta}(s)|^pds+2c_p^{1/2}T\Delta^{p/4}(h(\Delta))^{p}.
\end{align*}
Noting from \eqref{sec4:eq:6} that $[\Delta^{1/4}(h(\Delta))]^{p}\le 1 $, we have 
\begin{align*}
\mathcal{J}_2 &\le r_2\mathbb{E}\int_{0}^{t\wedge\eth_{mn}}|x_{\Delta}(s)|^pds,
\end{align*}
where $r_2=(2c_p^{1/2}T)\vee (p-2) $. We now combine $\mathcal{J}_1$ and $\mathcal{J}_2$ to get
\begin{align*}
\mathbb{E}|x_{\Delta}(t\wedge\eth_{mn})|^p&\le |x_0|^p+\mathbb{E}\int_{0}^{t\wedge\eth_{mn}}(r_1|x_{\Delta}(s)|^p+(r_1+r_2)|\bar{x}_{\Delta}(s)|^{p})ds\\
&\le |x_0|^p+(2r_1+r_2)\int_{0}^{t}\sup_{0\leq t \leq s}\Big(\mathbb{E}|x_{\Delta}(t\wedge\eth_{mn})|^p \Big)ds.
\end{align*}
The Gronwall inequality yields
\begin{equation*}
  \sup_{0\leq t< \infty}(\mathbb{E}|x_{\Delta}(t\wedge\eth_{mn})|^p)\leq c_5
\end{equation*}
 where $c_5=|x_0|^pe^{(2r_1+r_2)}$ is independent of $\Delta$. Noting that
\begin{align*}
\sup_{0\leq t< \infty}(\mathbb{E}|x_{\Delta}(t\wedge\eth_{mn})|^p)&\ge \sup_{0\leq t< \infty}(\mathbb{E}|x_{\Delta}(t\wedge \hbar^*_n)|^p1_{(t\wedge\hbar_n\le \hbar^*_m)}),
\end{align*}
we can set $n\rightarrow \infty$ to obtain
\begin{equation*}
\sup_{0\leq t< \infty}\big(\mathbb{E}|x_{\Delta}(t)|^p1_{(t\le \hbar^*_m)}\big)\le c_5
\end{equation*}
as the desired result. The proof is now complete.
\end{proof}
\subsection{Strong convergence}
Before we establish the main result in this section, we need the following lemmas. The proofs of these lemmas could be found in \cite{emma}.
\begin{lemma}\label{eq:L3}
Let equation \eqref{sec3:eq:2} hold and $T>0$ be fixed. Then for any $\epsilon\in (0,1)$, there exists a pair of positive constants $n=n(\epsilon)$ and $\Delta^1=\Delta^1(\epsilon)$ such that for each $\Delta\in (0,\Delta^1]$, we have
\begin{equation}\label{sec5:eq:7}
  \mathbb{P}(\vartheta_n\le T)\le \epsilon,
\end{equation}
where 
\begin{equation}\label{sec5:eq:8}
\vartheta_n=\vartheta(\Delta,n)=\inf\{t\in [0,T]:\varphi_{\Delta}(t)\notin (1/n,n)\}.
\end{equation}
is a stopping time.
\end{lemma}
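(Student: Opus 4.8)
The plan is to run the same Lyapunov argument that proves the existence Lemma \ref{sec3:eq:000}, now applied to the continuous-time truncated process $\varphi_{\Delta}$. I would use $H(\varphi)=\varphi^{\gamma}-1-\gamma\log(\varphi)$ with $\gamma\in(0,1)$, recalling that $H\ge 0$ and $H(\varphi)\to\infty$ both as $\varphi\to 0^{+}$ and as $\varphi\to\infty$, so that $H(1/n)\wedge H(n)\to\infty$ as $n\to\infty$. The strategy is to bound $\mathbb{E}H(\varphi_{\Delta}(T\wedge\vartheta_n))$ above by a constant free of $n$, bound it below by $[H(1/n)\wedge H(n)]\,\mathbb{P}(\vartheta_n\le T)$, and then choose $n$ large and $\Delta$ small so the quotient drops below $\epsilon$. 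The decisive point is that the upper bound on the drift of $H(\varphi_\Delta)$ must be independent of $n$; this is exactly the Khasminskii-type bound $LH(\varphi)\le K_0$ valid for \emph{all} $\varphi>0$, which was already produced in the proof of Lemma \ref{sec3:eq:000}.

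First I would apply the It\^o formula to $H(\varphi_{\Delta}(t))$ and stop at $T\wedge\vartheta_n$. On $[0,\vartheta_n)$ both $\varphi_\Delta(s)$ and the step value $\bar\varphi_\Delta(s)=\varphi_\Delta(t_k)$ (with $t_k\le s<\vartheta_n$) lie in $(1/n,n)$, where $H'$ and $H''$ are bounded; hence the stochastic integral is a genuine martingale and vanishes under expectation, leaving
\begin{equation*}
\mathbb{E}H(\varphi_{\Delta}(T\wedge\vartheta_n))=H(\varphi_0)+\mathbb{E}\int_0^{T\wedge\vartheta_n}\Big(H'(\varphi_\Delta(s))f_2^{\Delta}(\bar\varphi_\Delta(s))+\tfrac12 H''(\varphi_\Delta(s))|g_2^{\Delta}(\bar\varphi_\Delta(s))|^2\Big)ds.
\end{equation*}

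Next I would restrict to $\Delta$ so small that $\nu^{-1}(h(\Delta))\ge n$, which renders the truncation inactive on $[1/n,n]$, so that $f_2^{\Delta}=f_2$ and $g_2^{\Delta}=g_2$ there. Then I would rewrite the integrand as $LH(\bar\varphi_\Delta(s))$ plus the discretisation error
\begin{equation*}
[H'(\varphi_\Delta(s))-H'(\bar\varphi_\Delta(s))]f_2(\bar\varphi_\Delta(s))+\tfrac12[H''(\varphi_\Delta(s))-H''(\bar\varphi_\Delta(s))]|g_2(\bar\varphi_\Delta(s))|^2.
\end{equation*}
The main term satisfies $LH(\bar\varphi_\Delta(s))\le K_0$ with $K_0$ independent of $n$ and $\Delta$ by Lemma \ref{sec3:eq:000}. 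Since $H',H''$ are Lipschitz and $f_2,g_2$ are bounded on the compact $[1/n,n]$, the error is dominated by $C_n|\varphi_\Delta(s)-\bar\varphi_\Delta(s)|$; taking expectations, using Lemma \ref{sec5:eq:L1**} (with $p=2$ and Jensen's inequality) together with $\Delta^{1/2}h(\Delta)\le\Delta^{1/4}\to 0$ from \eqref{sec4:eq:6}, shows $C_n\,\mathbb{E}\int_0^{T}|\varphi_\Delta(s)-\bar\varphi_\Delta(s)|\,ds\to 0$ as $\Delta\to0$. Hence $\mathbb{E}H(\varphi_{\Delta}(T\wedge\vartheta_n))\le H(\varphi_0)+K_0T+1$ for $\Delta$ small enough.

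Finally, by continuity of $\varphi_\Delta$, on $\{\vartheta_n\le T\}$ we have $\varphi_\Delta(\vartheta_n)\in\{1/n,n\}$, so $H(\varphi_\Delta(T\wedge\vartheta_n))\ge [H(1/n)\wedge H(n)]\,1_{\{\vartheta_n\le T\}}$, and therefore
\begin{equation*}
\mathbb{P}(\vartheta_n\le T)\le\frac{H(\varphi_0)+K_0T+1}{H(1/n)\wedge H(n)}.
\end{equation*}
I would then order the quantifiers carefully: choose $n=n(\epsilon)$ large enough that this quotient is $<\epsilon$ (possible since the denominator diverges while the numerator is $n$-free), and only afterwards choose $\Delta^1=\Delta^1(\epsilon)$ so small that both $\nu^{-1}(h(\Delta))\ge n$ and the error contribution is $\le 1$ for every $\Delta\in(0,\Delta^1]$. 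The main obstacle is precisely this separation: the dominant drift constant $K_0$ must be kept free of $n$ by exploiting the Khasminskii structure rather than the naive boundedness of the coefficients on $[1/n,n]$ (which would yield an $n$-dependent constant liable to cancel the growth of the denominator), while the genuinely $n$-dependent term $C_n$ is absorbed through the smallness of $\Delta$.
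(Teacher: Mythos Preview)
Your argument is correct and follows the same Lyapunov template the paper uses for the analogous result on $x_\Delta$ (the second Lemma labelled \texttt{eq:L3}); the paper does not actually prove the present lemma but refers to \cite{emma}. The only cosmetic difference from the paper's parallel proof is the way you split the integrand: you evaluate $LH$ at the step value $\bar\varphi_\Delta$ and push the discretisation error onto Lipschitz bounds for $H'$ and $H''$ on $[1/n,n]$, whereas the paper (in its proof for $x_\Delta$) evaluates $LH$ at the continuous value and pushes the error onto the local Lipschitz constants of the coefficients $f,g$. Both decompositions produce an $n$-independent leading constant (the Khasminskii bound $K_0$) plus an $n$-dependent remainder of order $\Delta^{1/4}$, so the quantifier ordering you describe---fix $n$ first, then shrink $\Delta$---works identically in either version.
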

\begin{lemma}\label{eq:l8}
Let equation \eqref{sec3:eq:2} hold. Then for any $p\geq 2$, $T> 0$, we have
\begin{equation}\label{sec5:eq:9}
\mathbb{E}\Big( \sup_{0\leq t \leq T}|\varphi_{\Delta}(t \wedge \upsilon_n)-\varphi(t \wedge \upsilon_n)|^p  \Big)\le \mathcal{K}_1\Delta^{p/4} 
\end{equation}
for any sufficiently large $n$ and any $\Delta\in (0,\Delta^*]$, where $\mathcal{K}_1$ is a constant independent of $\Delta$ and $\upsilon_n$ is a stopping time. Consequently, we have
\begin{equation}\label{sec5:eq:10}
\lim_{\Delta\rightarrow 0}\mathbb{E}\Big( \sup_{0\leq t \leq T}|\varphi_{\Delta}(t \wedge \upsilon_n)-\varphi(t \wedge\upsilon_n)|^p \Big)=0.
\end{equation}
\end{lemma}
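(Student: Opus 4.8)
The plan is to control the error process $e_\Delta(t) := \varphi_\Delta(t) - \varphi(t)$ up to the stopping time, which I take to be $\upsilon_n := \vartheta_n \wedge \rho_n$, where $\vartheta_n$ is the exit time of the numerical solution from $(1/n,n)$ introduced in Lemma \ref{eq:L3} and $\rho_n := \inf\{t \in [0,T] : \varphi(t) \notin (1/n,n)\}$ is the corresponding exit time for the exact solution. The crucial preliminary observation is that on $[0,\upsilon_n)$ both $\varphi(s)$ and $\varphi_\Delta(s)$, and hence also the step value $\bar{\varphi}_\Delta(s) = \varphi_\Delta(t_k)$ for the largest grid point $t_k \le s$, remain inside $(1/n,n)$. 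Choosing $\Delta$ small enough that $\nu^{-1}(h(\Delta)) \ge n$, the truncation is then inactive there, so $f_2^\Delta(\bar{\varphi}_\Delta(s)) = f_2(\bar{\varphi}_\Delta(s))$ and $g_2^\Delta(\bar{\varphi}_\Delta(s)) = g_2(\bar{\varphi}_\Delta(s))$ for all $s < \upsilon_n$.

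Next I would apply the It\^{o} formula to $|e_\Delta(t \wedge \upsilon_n)|^p$, using the differentials $d\varphi = f_2(\varphi)\,dt + g_2(\varphi)\,dB_2$ and $d\varphi_\Delta = f_2^\Delta(\bar{\varphi}_\Delta)\,dt + g_2^\Delta(\bar{\varphi}_\Delta)\,dB_2$, then take the supremum over $t \in [0,T]$ and estimate the resulting stochastic-integral term by the Burkholder--Davis--Gundy inequality. To handle the drift and diffusion differences I would insert and subtract $f_2(\bar{\varphi}_\Delta(s))$ and $g_2(\bar{\varphi}_\Delta(s))$ and use the coincidence above to replace the truncated coefficients with the genuine ones; the local Lipschitz bound of Lemma \ref{sec4:eq:L1} with $R = n$ then gives $|f_2(\bar{\varphi}_\Delta(s)) - f_2(\varphi(s))| \vee |g_2(\bar{\varphi}_\Delta(s)) - g_2(\varphi(s))| \le L_n |\bar{\varphi}_\Delta(s) - \varphi(s)|$, which I further split as $|\bar{\varphi}_\Delta(s) - \varphi_\Delta(s)| + |e_\Delta(s)|$.

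The first piece is controlled by Lemma \ref{sec5:eq:L1**}, giving $\mathbb{E}|\varphi_\Delta(s) - \bar{\varphi}_\Delta(s)|^p \le c_p \Delta^{p/2}(h(\Delta))^p \le c_p \Delta^{p/4}$, where the final step uses $\Delta^{1/4} h(\Delta) \le 1$ from \eqref{sec4:eq:6}; the second piece feeds the Gronwall term. After using Young's inequality to absorb the supremum produced by the BDG estimate back into the left-hand side, I would reach an inequality of the form $\mathbb{E}\sup_{0 \le s \le t}|e_\Delta(s \wedge \upsilon_n)|^p \le C \int_0^t \mathbb{E}\sup_{0 \le u \le s}|e_\Delta(u \wedge \upsilon_n)|^p\,ds + \mathcal{K}_1 \Delta^{p/4}$, whereupon the Gronwall inequality yields \eqref{sec5:eq:9} with $\mathcal{K}_1$ depending on $n,p,T,L_n$ but not on $\Delta$. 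The consequence \eqref{sec5:eq:10} then follows immediately by letting $\Delta \to 0$.

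The hardest part will be the supremum estimate: controlling $\mathbb{E}\sup_{0 \le t \le T}|e_\Delta(t \wedge \upsilon_n)|^p$ rather than a pointwise $p$-th moment forces the use of the Burkholder--Davis--Gundy inequality on the martingale part, and the resulting term of the form $\mathbb{E}\big(\int_0^{t \wedge \upsilon_n} |g_2(\bar{\varphi}_\Delta(s)) - g_2(\varphi(s))|^2\,ds\big)^{p/2}$ must be reorganized, via H\"older in time and Young's inequality, so that the supremum it generates can be absorbed on the left before Gronwall is applied. Keeping the constant $\mathcal{K}_1$ independent of $\Delta$ throughout, which hinges on the moment bound of Lemma \ref{sec5:eq:L1} and on the uniform estimate $\Delta^{p/2}(h(\Delta))^p \le \Delta^{p/4}$, is the other point requiring care.
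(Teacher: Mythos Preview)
The paper does not actually prove this lemma in the text; it cites \cite{emma}. However, the analogous result for the $x$-process (the second lemma labelled \texttt{eq:l8}) is proved in full, and your proposal follows that template closely: localise via the joint exit time so that the truncation is inactive and the local Lipschitz constant $L_n$ applies, split $|\bar\varphi_\Delta-\varphi|$ into the one-step discrepancy handled by Lemma~\ref{sec5:eq:L1**} and the running error $|e_\Delta|$, invoke $\Delta^{1/4}h(\Delta)\le 1$ to land on the rate $\Delta^{p/4}$, and close with Gronwall.

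The one methodological difference is that the paper, in its proof for $x$, does not apply the It\^o formula to $|e_\Delta|^p$. Instead it writes $e_\Delta(t)$ directly as the difference of the two integral representations, bounds $\sup_t|e_\Delta(t)|^p$ by $2^{p-1}$ times the $p$-th powers of the drift and diffusion integrals, and then applies H\"older in time to the drift piece and Burkholder--Davis--Gundy to the diffusion piece. This route sidesteps the cross term $p|e_\Delta|^{p-2}e_\Delta\,(g_2^\Delta(\bar\varphi_\Delta)-g_2(\varphi))\,dB_2$ that your It\^o approach generates, and hence avoids the Young-inequality absorption step you flag as the hardest part: one arrives directly at a Gronwall-ready inequality with $\int_0^{t_1}\mathbb{E}\sup_{0\le u\le s}|e_\Delta(u\wedge\upsilon_n)|^p\,ds$ on the right. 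Both routes are standard and yield the same bound; the paper's is marginally more economical, while yours is equally valid.
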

Let us proceed to establish the following useful lemmas.
\begin{lemma}\label{eq:L3}
Let Assumption \ref{sec3:assump:1} hold and $T>0$ be fixed. Define a stopping time by 
\begin{equation}\label{eq:7*}
\vartheta^*_n=\vartheta^*(\Delta,n)=\inf\{t\in [0,T]:x_{\Delta}(t)\notin (1/n,n)\}.
\end{equation}
Then for any $\epsilon\in (0,1)$, there exists a pair of positive constants $n=n(\epsilon)$ and $\Delta^1=\Delta^1(\epsilon)$ such that for each $\Delta\in (0,\Delta^1]$, we have
\begin{equation}\label{eq:7}
  \mathbb{P}(\vartheta^*_n\le T)\le \epsilon.
\end{equation}
\end{lemma}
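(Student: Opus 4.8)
The plan is to exhibit a Lyapunov function that blows up at both endpoints of $(1/n,n)$, to control the exit of $x_{\Delta}$ through an It\^{o} estimate, and to tame the stochastic volatility factor $\sqrt{|\bar{\varphi}_{\Delta}|}$ by first confining $\varphi_{\Delta}$ to a compact subinterval of $\mathbb{R}_+$ by means of the analogous bound \eqref{sec5:eq:7} for the variance process. Throughout, fix $\gamma\in(0,1)$ and recall the $C^2$-function $H(x)=x^{\gamma}-1-\gamma\log(x)$ from \eqref{sec3:eq:6}, which is nonnegative, attains its minimum $0$ at $x=1$, and satisfies $H(x)\to\infty$ as $x\to0^+$ or $x\to\infty$, so that $H(1/n)\wedge H(n)\to\infty$ as $n\to\infty$.

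First I would split according to whether the variance process has already left a fixed interval. Applying \eqref{sec5:eq:7} with $\epsilon/2$ in place of $\epsilon$, choose an integer $m=m(\epsilon)$ and a step bound $\Delta_\varphi=\Delta_\varphi(\epsilon)$ so that $\mathbb{P}(\vartheta_m\le T)\le \epsilon/2$ for all $\Delta\in(0,\Delta_\varphi]$, where $\vartheta_m$ is the stopping time \eqref{sec5:eq:8} with $n$ replaced by $m$. Since $\bar{\varphi}_{\Delta}(s)=\varphi_{\Delta}(t_k)$ on each $[t_k,t_{k+1})$ and the two coincide at grid points, on the event $\{\vartheta_m>T\}$ one has $\bar{\varphi}_{\Delta}(s)\in(1/m,m)$ for all $s\in[0,T]$, whence $|\bar{\varphi}_{\Delta}(s)|\le m$ there. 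It then remains only to bound $\mathbb{P}(\vartheta^*_n\le T,\ \vartheta_m>T)$ by $\epsilon/2$, because
\begin{equation*}
\mathbb{P}(\vartheta^*_n\le T)\le \mathbb{P}(\vartheta^*_n\le T,\ \vartheta_m>T)+\mathbb{P}(\vartheta_m\le T).
\end{equation*}

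Next I would apply the It\^{o} formula \eqref{sec2:eq:5} to $H(x_{\Delta}(t\wedge\vartheta^*_n\wedge\vartheta_m))$ and take expectations, the stochastic integral vanishing since its integrand is bounded on the stopped interval. For $s<\vartheta^*_n$ both $x_{\Delta}(s)$ and $\bar{x}_{\Delta}(s)$ lie in $(1/n,n)$, so for $\Delta$ small enough that $\nu^{-1}(h(\Delta))\ge n$ the truncation is inactive there and $f_1^{\Delta},g_1^{\Delta}$ agree with $f_1,g_1$. Writing the diffusion operator as a ``diagonal'' part evaluated at $x_{\Delta}(s)$ plus a discrepancy, I would estimate the diagonal part
\begin{equation*}
H'(x_{\Delta})f_1(x_{\Delta})+\tfrac{1}{2}H''(x_{\Delta})\,\bar{\varphi}_{\Delta}\,g_1(x_{\Delta})^2\le K_m
\end{equation*}
exactly as in the proof of Lemma \ref{sec3:eq:00}: using $|\bar{\varphi}_{\Delta}|\le m$ together with Assumption \ref{sec3:assump:1} (specifically $1+\rho>2\theta$), the leading terms $-\alpha_1\mu_1\gamma x^{-1}$ as $x\to0^+$ and $-\alpha_1\gamma x^{\gamma+\rho-1}$ as $x\to\infty$ dominate and force an upper bound $K_m$ depending on $m$ but independent of $n$ and $\Delta$. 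The remaining discrepancy terms, of the form $H'(x_{\Delta})[f_1(\bar{x}_{\Delta})-f_1(x_{\Delta})]$ and $\tfrac12 H''(x_{\Delta})\bar{\varphi}_{\Delta}[g_1(\bar{x}_{\Delta})^2-g_1(x_{\Delta})^2]$, I would handle by bounding $|H'(x_{\Delta})|$ and $|H''(x_{\Delta})|$ by constants $M_n$ depending on $n$ (as $x_{\Delta}\in(1/n,n)$), invoking the local Lipschitz property \eqref{sec4:eq:1} on $(1/n,n)$, and controlling $\mathbb{E}|x_{\Delta}(s)-\bar{x}_{\Delta}(s)|$ via \eqref{sec5:eq:4} together with $\Delta^{1/4}h(\Delta)\le1$ from \eqref{sec4:eq:6}; this yields a contribution of order $D(n,m)\Delta^{1/4}$, hence at most $1$ once $\Delta$ is small. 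Collecting these estimates gives, for all sufficiently small $\Delta$,
\begin{equation*}
\mathbb{E}\big[H\big(x_{\Delta}(T\wedge\vartheta^*_n\wedge\vartheta_m)\big)\big]\le H(x_0)+K_mT+1,
\end{equation*}
whose right-hand side does not depend on $n$.

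Finally, on $\{\vartheta^*_n\le T,\ \vartheta_m>T\}$ the stopped time equals $\vartheta^*_n$ and $x_{\Delta}(\vartheta^*_n)\in\{1/n,n\}$, so nonnegativity of $H$ gives
\begin{equation*}
[H(1/n)\wedge H(n)]\,\mathbb{P}(\vartheta^*_n\le T,\ \vartheta_m>T)\le \mathbb{E}\big[H\big(x_{\Delta}(T\wedge\vartheta^*_n\wedge\vartheta_m)\big)\big]\le H(x_0)+K_mT+1.
\end{equation*}
Since $H(1/n)\wedge H(n)\to\infty$, I would then choose $n=n(\epsilon)$ large enough that this quotient is at most $\epsilon/2$, and set $\Delta^1=\Delta^1(\epsilon)$ to be the minimum of $\Delta_\varphi$, the threshold ensuring $\nu^{-1}(h(\Delta))\ge n$, and the threshold making the discrepancy contribution at most $1$; combining with the splitting above yields $\mathbb{P}(\vartheta^*_n\le T)\le\epsilon$. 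The main obstacle is the third step: because $H'$ and $H''$ blow up near the endpoints, the discrepancy between $x_{\Delta}$ and its step version must be absorbed by the factor $\Delta^{1/4}$, which is legitimate only because $n$ and $m$ are fixed before $\Delta^1$ is chosen, so that the $n$-dependent constants $M_n$ and the local Lipschitz constants remain harmless.
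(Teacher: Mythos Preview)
Your argument is correct and shares the Lyapunov/It\^{o} skeleton with the paper (same function $H$ from \eqref{sec3:eq:6}, same local-Lipschitz treatment of the $x_{\Delta}$--$\bar{x}_{\Delta}$ discrepancy via \eqref{sec4:eq:1} and Lemma~\ref{sec5:eq:L1***}), but you handle the variance factor by a genuinely different device. The paper does \emph{not} introduce the auxiliary stopping time $\vartheta_m$: it applies It\^{o}'s formula to $H(x_{\Delta}(t\wedge\vartheta^*_n))$ alone, evaluates the diagonal diffusion term at $\varphi_{\Delta}(s)$ rather than $\bar{\varphi}_{\Delta}(s)$, and then carries an extra $|\bar{\varphi}_{\Delta}-\varphi_{\Delta}|$ discrepancy in $\mathcal{J}_5$, which it controls through Lemma~\ref{sec5:eq:L1**} together with the moment bound of Lemma~\ref{sec5:eq:L1}. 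Your route---first confining $\varphi_{\Delta}$ to $(1/m,m)$ by the already-established estimate \eqref{sec5:eq:7} and working on the joint stopped interval $[0,T\wedge\vartheta^*_n\wedge\vartheta_m]$---gives a pathwise bound $|\bar{\varphi}_{\Delta}|\le m$, so the diagonal $LH$ bound $K_m$ is manifestly independent of $n$, exactly mirroring the existence proof of Lemma~\ref{sec3:eq:00}; this makes the order of choices ($m$, then $n$, then $\Delta^1$) fully explicit. The cost is the union-bound split and the reliance on the $\varphi$-analogue \eqref{sec5:eq:7}. The paper's route avoids that extra stopping time but ends with a final inequality in which several constants ($\iota_2,\iota_3,K_n$) depend on $n$ and must be pushed into the $\Delta$-small step, and the asserted $n$-independence of the leading constant $K_9$ is left implicit rather than displayed.
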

\begin{proof}
We apply the It\^{o} formula to \eqref{sec3:eq:6} to compute 
\begin{align*}
&\mathbb{E}(H(x_{\Delta}(t\wedge \vartheta^*_n)))-H(x(0))\\
&=\mathbb{E}\int_{0}^{t\wedge\vartheta^*_n}\Big(H_x(x_{\Delta}(s))f_1^{\Delta}(\bar{x}_{\Delta}(s))+\frac{1}{2}H_{xx}(x_{\Delta}(s))\bar{\varphi}_{\Delta}(s)g_1^{\Delta}(\bar{x}_{\Delta}(s))^2\Big)ds\\
&\le \mathcal{J}_3+\mathcal{J}_4+\mathcal{J}_5
\end{align*}
where,
\begin{align*}
\mathcal{J}_3&=\mathbb{E}\int_{0}^{t\wedge\vartheta^*_n}\Big(H_x(x_{\Delta}(s))f_1^{\Delta}(x_{\Delta}(s))+\frac{1}{2}H_{xx}(x_{\Delta}(s))\varphi_{\Delta}(s)g_1^{\Delta}(x_{\Delta}(s))^2\Big)\\
\mathcal{J}_4&=\mathbb{E}\int_{0}^{t\wedge\vartheta^*_n}H_x(x_{\Delta}(s))\Big(f_1^{\Delta}(\bar{x}_{\Delta}(s))-f_1^{\Delta}(x_{\Delta}(s))\Big)ds\\
\mathcal{J}_5&=\mathbb{E}\int_{0}^{t\wedge\vartheta^*_n}\frac{1}{2}H_{xx}(x_{\Delta}(s))\Big(\bar{\varphi}_{\Delta}(s)g_1^{\Delta}(\bar{x}_{\Delta}(s))^2-\varphi_{\Delta}(s)g_1^{\Delta}(x_{\Delta}(s))^2\Big)ds.
\end{align*}
So, By \eqref{sec2:eq:3} and \eqref{sec3:eq:1}, we can find a constant $K_9$ such that
\begin{align*}
\mathcal{J}_3&\le \mathbb{E}\int_{0}^{t\wedge\vartheta^*_n}LH(x_{\Delta}(s),\varphi_{\Delta}(s))ds\\
&\le K_9T.
\end{align*}
By the definition of the truncated functions, we note for $s\in [0,t\wedge\vartheta^*_n]$,
\begin{equation}\label{sec5:eq:recall1}
f_1^{\Delta}(x_{\Delta}(s))=f_1(x_{\Delta}(s))\text{ and } g_1^{\Delta}(x_{\Delta}(s))=g_1(x_{\Delta}(s)).
\end{equation}
So by Lemma \ref{sec4:eq:L1}, we have
where,
\begin{align*}
\mathcal{J}_4&\le \mathbb{E}\int_{0}^{t\wedge\vartheta^*_n}H_x(x_{\Delta}(s))\vert f_1(\bar{x}_{\Delta}(s))-f_1(x_{\Delta}(s))\vert ds\\
&\le \mathbb{E}\int_{0}^{t\wedge\vartheta^*_n}K_nH_x(x_{\Delta}(s))\vert \bar{x}_{\Delta}(s)-x_{\Delta}(s)\vert ds.
\end{align*}
Similarly,
\begin{align*}
\mathcal{J}_5&=\mathbb{E}\int_{0}^{t\wedge\vartheta^*_n}\frac{1}{2}H_{xx}(x_{\Delta}(s))\Big(\bar{\varphi}_{\Delta}(s)g_1(\bar{x}_{\Delta}(s))^2-\varphi(s)g_1(x_{\Delta}(s))^2\Big)ds\\
&=\mathbb{E}\int_{0}^{t\wedge\vartheta^*_n}\frac{1}{2}H_{xx}(x_{\Delta}(s))\Big(\bar{\varphi}_{\Delta}(s)g_1(\bar{x}_{\Delta}(s))^2-\bar{\varphi}_{\Delta}(s)g_1(x(s))^2\Big)ds\\
&+\mathbb{E}\int_{0}^{t\wedge\vartheta^*_n}\frac{1}{2}H_{xx}(x_{\Delta}(s))\Big(\bar{\varphi}_{\Delta}(s)g_1(x(s))^2-\varphi(s)g_1(x_{\Delta}(s))^2\Big)ds\\
&\le \mathbb{E}\int_{0}^{t\wedge\vartheta^*_n}\frac{\bar{\varphi}_{\Delta}(s)}{2}H_{xx}(x_{\Delta}(s))\vert g_1(\bar{x}_{\Delta}(s))-g_1(x_{\Delta}(s))\vert \vert g_1(\bar{x}_{\Delta}(s))+g_1(x_{\Delta}(s))\vert ds\\
&+\mathbb{E}\int_{0}^{t\wedge\vartheta^*_n}\frac{1}{2}H_{xx}(x_{\Delta}(s))g_1(x(s))^2\vert\bar{\varphi}_{\Delta}(s)-\varphi(s)\vert ds.
\end{align*}
Noting from \eqref{sec4:eq:5} that $x_{\Delta}(s),\bar{x}_{\Delta}(s)\in[1/n,n]$ for $s\in[0,t\wedge\vartheta^*_n]$, we have $ g_1(\bar{x}_{\Delta}(s))\vee g_1(x_{\Delta}(s))\le \nu(n)$. So by Lemma \ref{sec4:eq:L1}, we obtain
\begin{align*}
\mathcal{J}_5&\le \mathbb{E}\int_{0}^{t\wedge\vartheta^*_n}\bar{\varphi}_{\Delta}(s)H_{xx}(x_{\Delta}(s))\vert g_1(\bar{x}_{\Delta}(s))-g_1(x_{\Delta}(s))\vert ds\\
&+\mathbb{E}\int_{0}^{t\wedge\vartheta^*_n}\frac{\nu(n)^2}{2}H_{xx}(x_{\Delta}(s))\vert\bar{\varphi}_{\Delta}(s)-\varphi(s)\vert ds\\
&\le \mathbb{E}\int_{0}^{t\wedge\vartheta^*_n}K_n\bar{\varphi}_{\Delta}(s)H_{xx}(x_{\Delta}(s))\vert \bar{x}_{\Delta}(s)-x_{\Delta}(s)\vert ds\\
&+\mathbb{E}\int_{0}^{t\wedge\vartheta^*_n}\frac{\nu(n)^2}{2}H_{xx}(x_{\Delta}(s))\vert\bar{\varphi}_{\Delta}(s)-\varphi(s)\vert ds.
\end{align*}
Combining $\mathcal{J}_3$, $\mathcal{J}_4$ and $\mathcal{J}_5$, we then have
\begin{align*}
&\mathbb{E}(H(x_{\Delta}(t\wedge \vartheta^*_n)))\\
&\le H(x(0))+K_9T+\mathbb{E}\int_{0}^{t\wedge\vartheta^*_n}K_nH_x(x_{\Delta}(s))\vert \bar{x}_{\Delta}(s)-x_{\Delta}(s)\vert ds\\
&+\mathbb{E}\int_{0}^{t\wedge\vartheta^*_n}\frac{\nu(n)^2}{2}H_{xx}(x_{\Delta}(s))\vert\bar{\varphi}_{\Delta}(s)-\varphi(s)\vert ds\\
&+ \mathbb{E}\int_{0}^{t\wedge\vartheta^*_n}K_n\bar{\varphi}_{\Delta}(s)H_{xx}(x_{\Delta}(s))\vert \bar{x}_{\Delta}(s)-x_{\Delta}(s)\vert ds\\
&\le  H(x(0))+K_9T+\iota_2\int_{0}^{T}\mathbb{E}\vert\bar{\varphi}_{\Delta}(s)-\varphi(s)\vert ds\\
&+\iota_3\int_{0}^{T}\mathbb{E}(\vert\bar{\varphi}_{\Delta}(s)\vert\vert \bar{x}_{\Delta}(s)-x_{\Delta}(s)\vert )ds
\end{align*}
where
\begin{equation*}
\iota_2=\max_{1/n\le x\le n}\Big(\frac{\nu(n)^2}{2}H_{xx}(x)\Big)
\end{equation*}
and
\begin{equation*}
\iota_3=\max_{1/n\le x\le n}\Big(K_nH_x(x)+K_nH_{xx}(x)\Big).
\end{equation*}
So by the Young inequality and Lemmas \ref{sec5:eq:L1}, \ref{sec5:eq:L1**} and \ref{sec5:eq:L1***}, we now have 
\begin{align*}
\mathbb{E}(H(x_{\Delta}(t\wedge \vartheta^*_n)))
&\le  H(x(0))+K_9T+\iota_2\int_{0}^{T}\mathbb{E}\vert\bar{\varphi}_{\Delta}(s)-\varphi(s)\vert ds\\
&+\iota_3\int_{0}^{T}\mathbb{E}(\vert\bar{\varphi}_{\Delta}(s)\vert^2)^{\frac{1}{2}}(\vert \bar{x}_{\Delta}(s)-x_{\Delta}(s)\vert^2 )^{\frac{1}{2}}ds\\
&\le  H(x(0))+K_9T+\iota_2\int_{0}^{T}\mathbb{E}\vert\bar{\varphi}_{\Delta}(s)-\varphi(s)\vert ds\\
&+\frac{\iota_3}{2}\int_{0}^{T}\mathbb{E}\vert\bar{\varphi}_{\Delta}(s)\vert^2ds+\frac{\iota_3}{2}\int_{0}^{T}(\mathbb{E}\vert \bar{x}_{\Delta}(s)-x_{\Delta}(s)\vert^p)^\frac{2}{p} ds\\
&\le  H(x(0))+K_9T+\iota_2c_pT\Delta^{p/2} (h(\Delta))^p\\
&+\frac{\iota_3}{2}\big(C_p\Delta^{p/2} (h(\Delta))^p)^\frac{2}{p}T+\frac{\iota_3}{2}\int_{0}^{T}\mathbb{E}\vert\bar{\varphi}_{\Delta}(s)\vert^p ds\\
&\le  H(x(0))+K_9T+\iota_2c_pT\Delta^{p/2} (h(\Delta))^p\\
&+\frac{\iota_3}{2}C_p\Delta(h(\Delta))^2T+\frac{\iota_3}{2}\int_{0}^{T}\big(\sup_{0\le u\le s}\mathbb{E}\vert\bar{\varphi}_{\Delta}(u)\vert^p\big)^\frac{2}{p} ds\\
&\le  H(x(0))+K_9T+\iota_2c_pT\Delta^{p/2} (h(\Delta))^p\\
&+\frac{\iota_3}{2}C_p\Delta(h(\Delta))^2T+\frac{\iota_3}{2}c_4^\frac{2}{p}T.
\end{align*}
This implies 
\begin{equation}\label{eq:34}
\mathbb{P}(\vartheta^*_n\leq T)\leq \frac{H(x(0))+K_9T+\iota_2c_pT\Delta^{p/2} (h(\Delta))^p+\frac{\iota_3}{2}C_p\Delta(h(\Delta))^2T+\frac{\iota_3}{2}c_4^\frac{2}{p}T}{H(1/n)\wedge H(n)}.
\end{equation}
For any $\epsilon\in(0,1)$, we may select sufficiently large $n$ such that
\begin{equation}\label{eq:35}
\frac{H(x(0))+K_9T+\frac{\iota_3}{2}c_4^\frac{2}{p}T}{H(1/n)\wedge H(n)}\leq \frac{\epsilon}{2}
\end{equation}
and sufficiently small of each step size $\Delta\in (0,\Delta^1]$ such that
\begin{equation}\label{eq:36}
\frac{\iota_2c_pT\Delta^{p/2} (h(\Delta))^p+\frac{\iota_3}{2}C_p\Delta(h(\Delta))^2T}{H(1/n)\wedge H(n)}\leq \frac{\epsilon}{2}.
\end{equation}
We now combine \eqref{eq:35} and \eqref{eq:36} to get the required assertion.
\end{proof}
\begin{lemma}\label{eq:l8}
Let Assumption \ref{sec3:assump:1} hold.  Set
\begin{equation*}
\upsilon^*_n=\varrho_{mn}\wedge\vartheta_n\wedge\vartheta^*_n ,
\end{equation*}
where $\varrho_{mn}$, $\vartheta_n$ and $\vartheta^*_n$ are \eqref{sec3:eq:5}, \eqref{sec5:eq:8} and \eqref{eq:7*} respectively. Then for any $p\geq 2$, $T> 0$, we have
\begin{equation}\label{eq:37}
\mathbb{E}\Big( \sup_{0\leq t \leq T}|x_{\Delta}(t \wedge \upsilon^*_n)-x(t \wedge \upsilon^*_n)|^p  \Big)\le \mathcal{K}_2\Delta^{p(1/2\wedge 1/4\wedge 1/8)} (h(\Delta))^{p(1/2\wedge 1)} 
\end{equation}
for any sufficiently large $n$ and any $\Delta\in (0,\Delta^*]$, where $\mathcal{K}_2$ is a constant independent of $\Delta$. Consequently, we have
\begin{equation}\label{eq:38}
\lim_{\Delta\rightarrow 0}\mathbb{E}\Big( \sup_{0\leq t \leq T}|x_{\Delta}(t \wedge \upsilon^*_n)-x(t \wedge \upsilon^*_n)|^p \Big)=0.
\end{equation}
\end{lemma}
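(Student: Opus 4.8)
The plan is to control the localized error process $e_\Delta(t):=x_\Delta(t\wedge\upsilon^*_n)-x(t\wedge\upsilon^*_n)$ by a Gronwall argument fed by the three auxiliary estimates already in hand. The first observation is that on the stochastic interval $[0,\upsilon^*_n]$ every process is trapped in a compact subset of $\mathbb{R}_+$: by $\varrho_{mn}$ the exact pair $(x,\varphi)$ stays in $(1/n,n)\times(1/m,m)$, by $\vartheta_n$ the process $\varphi_\Delta$ and hence the step process $\bar\varphi_\Delta$ (which only takes left-endpoint values $\varphi_\Delta(t_k)$ with $t_k<\upsilon^*_n$) stay in $(1/n,n)$, and by $\vartheta^*_n$ the pair $(x_\Delta,\bar x_\Delta)$ stays in $(1/n,n)$. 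Consequently, for every $\Delta$ small enough that $\nu^{-1}(h(\Delta))\ge n$, the truncated coefficients coincide with the genuine ones along these paths, so that \eqref{sec4:eq:13} gives
\begin{align*}
e_\Delta(t)&=\int_0^{t\wedge\upsilon^*_n}\big(f_1(\bar x_\Delta(s))-f_1(x(s))\big)ds\\
&\quad+\int_0^{t\wedge\upsilon^*_n}\big(\sqrt{\bar\varphi_\Delta(s)}\,g_1(\bar x_\Delta(s))-\sqrt{\varphi(s)}\,g_1(x(s))\big)dB_1(s).
\end{align*}

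Next I would take $\sup_{0\le t\le T}$, raise to the $p$-th power, split the two integrals with the elementary inequality, bound the drift integral by H\"older's inequality and the diffusion integral by the Burkholder--Davis--Gundy inequality, and in both invoke the local Lipschitz bound of Lemma \ref{sec4:eq:L1} with constant $K_n$ together with the coefficient bound $g_1\le\nu(n)$. Writing $|f_1(\bar x_\Delta)-f_1(x)|\le K_n(|\bar x_\Delta-x_\Delta|+|x_\Delta-x|)$, the first summand is governed by Lemma \ref{sec5:eq:L1***} while the second feeds the Gronwall loop. For the diffusion term I would use the split
\[
\sqrt{\bar\varphi_\Delta}\,\big(g_1(\bar x_\Delta)-g_1(x)\big)+g_1(x)\,\big(\sqrt{\bar\varphi_\Delta}-\sqrt{\varphi}\big),
\]
bounding $\sqrt{\bar\varphi_\Delta}\le\sqrt n$ and $g_1(x)\le\nu(n)$; the first piece is treated exactly as the drift, while the second reduces matters to estimating $\mathbb{E}|\sqrt{\bar\varphi_\Delta}-\sqrt\varphi|^p$.

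This volatility term is the crux and the step I expect to be the main obstacle, precisely because of the square root. Using the universal inequality $|\sqrt a-\sqrt b|^p\le|a-b|^{p/2}$ for $a,b\ge0$, I would bound $\mathbb{E}|\sqrt{\bar\varphi_\Delta}-\sqrt\varphi|^p\le 2^{p/2-1}\big(\mathbb{E}|\bar\varphi_\Delta-\varphi_\Delta|^{p/2}+\mathbb{E}|\varphi_\Delta-\varphi|^{p/2}\big)$. The first term is $O(\Delta^{p/4}(h(\Delta))^{p/2})$ by Lemma \ref{sec5:eq:L1**} at order $p/2$, while the second is $O(\Delta^{p/8})$ by the volatility strong-convergence estimate \eqref{sec5:eq:9} at order $p/2$ (or via Jensen's inequality when $p<4$). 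This halving of the order through the square root is exactly what degrades the rate $\Delta^{1/4}$ to $\Delta^{1/8}$ and produces the factor $1/8$ in the statement; since the square root behaves only like $|a-b|^{1/2}$ on the diagonal, no better power is available from this crude bound, and keeping it uniform in $\Delta$ is what yields the stated rate.

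Finally I would assemble the estimates: after the H\"older and Burkholder--Davis--Gundy steps all $x$-error contributions appear as $\int_0^t\sup_{0\le u\le s}\mathbb{E}|e_\Delta(u)|^p\,ds$, and the remaining terms are bounded by a sum of $\Delta^{p/2}(h(\Delta))^p$, $\Delta^{p/4}(h(\Delta))^{p/2}$ and $\Delta^{p/8}$, with constants depending on $n,m,T,p$. The Gronwall inequality eliminates the integral term, and the standing constraint $\Delta^{1/4}h(\Delta)\le1$ from \eqref{sec4:eq:6} lets me absorb the faster-decaying summands into the slowest, giving
\[
\mathbb{E}\Big(\sup_{0\le t\le T}|x_\Delta(t\wedge\upsilon^*_n)-x(t\wedge\upsilon^*_n)|^p\Big)\le\mathcal{K}_2\,\Delta^{p(1/2\wedge1/4\wedge1/8)}(h(\Delta))^{p(1/2\wedge1)},
\]
with $\mathcal{K}_2$ independent of $\Delta$; the limit \eqref{eq:38} then follows on letting $\Delta\to0$.
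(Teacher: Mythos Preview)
Your proposal is correct and follows essentially the same route as the paper: split the localized error into drift and diffusion parts, handle the drift with H\"older plus the local Lipschitz bound and Lemma~\ref{sec5:eq:L1***}, handle the diffusion with Burkholder--Davis--Gundy plus the two-way split separating the $g_1$-increment from the $\sqrt{\cdot}$-increment, reduce the latter via $|\sqrt a-\sqrt b|^p\le|a-b|^{p/2}$ to Lemmas~\ref{sec5:eq:L1**} and the volatility convergence estimate \eqref{sec5:eq:9}, and close with Gronwall. The only cosmetic difference is that the paper inserts the intermediate term $\sqrt{\bar\varphi_\Delta}\,g_1(x_\Delta)$ before re-splitting, whereas you add and subtract $\sqrt{\bar\varphi_\Delta}\,g_1(x)$ directly; both decompositions lead to the same estimates and the same $\Delta^{p/8}$ rate loss through the square root.
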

\begin{proof}
It follows from \eqref{sec2:eq:1} and \eqref{sec4:eq:13} that
\begin{align*}
\mathbb{E}\Big(\sup_{0\leq t \leq t_1}|x_{\Delta}(t\wedge \upsilon^*_n)-x(t\wedge \upsilon^*_n)|^p\Big)\le \mathcal{J}_6+\mathcal{J}_7,
\end{align*}
 where
\begin{align*} 
\mathcal{J}_6&=2^{p-1}\Big( \mathbb{E}\Big|\int_{0}^{t_1\wedge \upsilon^*_n}(f_1^{\Delta}(\bar{x}_{\Delta}(s))-f_1(x(s)))ds\Big|^p\Big)\\
\mathcal{J}_7&=2^{p-1}\Big(\mathbb{E}(\sup_{0\leq t \leq t_1}\Big|\int_{0}^{t_1\wedge \upsilon^*_n}(\sqrt{\vert\bar{\varphi}_{\Delta}(s)\vert} g_1^{\Delta}(\bar{x}_{\Delta}(s))-\sqrt{\vert\varphi(s)\vert} g_1(x(s)))dB(s)\Big|^p)\Big).
\end{align*}
So by the H\"older inequality, \eqref{sec4:eq:1} and \eqref{sec5:eq:recall1}, we have 
\begin{align*} 
\mathcal{J}_6&\le 2^{p-1}T^{p-1}\Big( \mathbb{E}\int_{0}^{t_1\wedge \upsilon^*_n}|f_1^{\Delta}(\bar{x}_{\Delta}(s))-f_1(x(s))|^pds\Big)\\
&\le 2^{p-1}T^{p-1}\Big(\mathbb{E}\int_{0}^{t_1\wedge \upsilon^*_n}|f_1(\bar{x}_{\Delta}(s))-f_1(x(s))|^pds\Big)\\
&\le 2^{p-1}T^{p-1}K_n\mathbb{E}\int_{0}^{t_1\wedge \upsilon^*_n}|\bar{x}_{\Delta}(s)-x(s)|^pds\\
&\le 2^{2(p-1)}T^{p-1}K_n\mathbb{E}\int_{0}^{t_1\wedge \upsilon^*_n}|\bar{x}_{\Delta}(s)-x_{\Delta}(s)|^pds\\
&+2^{2(p-1)}T^{p-1}K_n\mathbb{E}\int_{0}^{t_1\wedge \upsilon^*_n}|x_{\Delta}(s)-x(s)|^pds\\
&\le 2^{2(p-1)}T^{p-1}K_n\int_{0}^{T}\mathbb{E}|\bar{x}_{\Delta}(s)-x_{\Delta}(s)|^pds\\
&+2^{2(p-1)}T^{p-1}K_n\int_{0}^{t_1}\sup_{0\le t\le s}\mathbb{E}|x_{\Delta}(t\wedge \upsilon^*_n)-x(t\wedge \upsilon^*_n)|^pds.
\end{align*}
By the Burkholder-Davis-Gundy inequality and \eqref{sec5:eq:recall1}, we also have 
\begin{align*}
\mathcal{J}_7&\le 2^{p-1}T^{\frac{p-2}{2}} \bar{C}_p\Big(\mathbb{E}\int_{0}^{t_1\wedge \upsilon^*_n}\vert \sqrt{\vert\bar{\varphi}_{\Delta}(s)\vert}g_1^{\Delta}(\bar{x}_{\Delta}(s))-\sqrt{\vert\varphi(s)\vert}g_1(x(s))\vert^pds \Big)\\
&\le 2^{p-1}T^{\frac{p-2}{2}} \bar{C}_p\Big(\mathbb{E}\int_{0}^{t_1\wedge \upsilon^*_n}\vert \sqrt{\vert\bar{\varphi}_{\Delta}(s)\vert}g_1(\bar{x}_{\Delta}(s))-\sqrt{\vert\varphi(s)\vert}g_1(x(s))\vert^pds\Big),
\end{align*}
where $\bar{C}_p$ is a positive constant. By elementary inequality, we now have 
\begin{align*}
\mathcal{J}_7&\le 2^{2(p-1)}T^{\frac{p-2}{2}} \bar{C}_p\Big(\mathbb{E}\int_{0}^{t_1\wedge \upsilon^*_n}\vert \sqrt{\vert\bar{\varphi}_{\Delta}(s)\vert}g_1(\bar{x}_{\Delta}(s))-\sqrt{\vert\bar{\varphi}_{\Delta}(s)\vert}g_1(x_{\Delta}(s))\vert^pds\Big)\\
&+ 2^{2(p-1)}T^{\frac{p-2}{2}} \bar{C}_p\Big(\mathbb{E}\int_{0}^{t_1\wedge \upsilon^*_n}\vert \sqrt{\vert\bar{\varphi}_{\Delta}(s)\vert}g_1(x_{\Delta}(s))-\sqrt{\vert\varphi(s)\vert}g_1(x(s))\vert^pds\Big)\\
&\le 2^{2(p-1)}T^{\frac{p-2}{2}} \bar{C}_p\Big(\mathbb{E}\int_{0}^{t_1\wedge \upsilon^*_n}\vert \sqrt{\vert\bar{\varphi}_{\Delta}(s)\vert}g_1(\bar{x}_{\Delta}(s))-\sqrt{\varphi(s)}g_1(\bar{x}_{\Delta}(s))\vert^pds\Big)\\
&+ 2^{2(p-1)}T^{\frac{p-2}{2}} \bar{C}_p\Big(\mathbb{E}\int_{0}^{t_1\wedge \upsilon^*_n}\vert \sqrt{\vert\varphi(s)\vert}g_1(\bar{x}_{\Delta}(s))-\sqrt{\vert\varphi(s)\vert}g_1(x(s))\vert^pds\Big)\\
&\le 2^{2(p-1)}T^{\frac{p-2}{2}} \bar{C}_p\Big(\mathbb{E}\int_{0}^{t_1\wedge \upsilon^*_n}g_1(\bar{x}_{\Delta}(s))^p\vert \sqrt{\vert\bar{\varphi}_{\Delta}(s)\vert}-\sqrt{\vert\varphi(s)\vert}\vert^pds\Big)\\
&+ 2^{2(p-1)}T^{\frac{p-2}{2}} \bar{C}_p\Big(\mathbb{E}\int_{0}^{t_1\wedge \upsilon^*_n}\vert\varphi(s)\vert^{\frac{p}{2}}\vert g_1(\bar{x}_{\Delta}(s))-g_1(x(s))\vert^pds\Big).
\end{align*}
It follows from \eqref{sec3:eq:3} and \eqref{sec4:eq:5} that
\begin{align*}
\mathcal{J}_7
&\le 2^{2(p-1)}T^{\frac{p-2}{2}} \bar{C}_p\nu(n)^p\Big(\mathbb{E}\int_{0}^{t_1\wedge \upsilon^*_n}\vert \sqrt{\vert\bar{\varphi}_{\Delta}(s)\vert}-\sqrt{\vert\varphi(s)\vert}\vert^pds\Big)\\
&+ 2^{2(p-1)}T^{\frac{p-2}{2}} \bar{C}_p n^{\frac{p}{2}}\Big(\mathbb{E}\int_{0}^{t_1\wedge \upsilon^*_n}\vert g_1(\bar{x}_{\Delta}(s))-g_1(x(s))\vert^pds\Big)\\
&\le 2^{2(p-1)}T^{\frac{p-2}{2}} \bar{C}_p\nu(n)^p\Big(\mathbb{E}\int_{0}^{t_1\wedge \upsilon^*_n}\vert \bar{\varphi}_{\Delta}(s)-\varphi(s)\vert^{\frac{p}{2}}ds\Big)\\
&+ 2^{2(p-1)}T^{\frac{p-2}{2}} \bar{C}_p n^{\frac{p}{2}}\Big(\mathbb{E}\int_{0}^{t_1\wedge \upsilon^*_n}\vert g_1(\bar{x}_{\Delta}(s))-g_1(x(s))\vert^pds\Big).
\end{align*}
Then, by elementary inequality and \eqref{sec4:eq:1}, we have
\begin{align*}
\mathcal{J}_7&\le 2^{2(p-1)}2^{\frac{p}{2}-1}T^{\frac{p-2}{2}} \bar{C}_p\nu(n)^p\Big(\mathbb{E}\int_{0}^{t_1\wedge \upsilon^*_n}\vert \bar{\varphi}_{\Delta}(s)-\varphi_{\Delta}(s)\vert^{\frac{p}{2}}ds\Big)\\
&+2^{2(p-1)}2^{\frac{p}{2}-1}T^{\frac{p-2}{2}} \bar{C}_p\nu(n)^p\Big(\mathbb{E}\int_{0}^{t_1\wedge \upsilon^*_n}\vert \varphi_{\Delta}(s)-\varphi(s)\vert^{\frac{p}{2}}ds\Big)\\
&+ 2^{3(p-1)}T^{\frac{p-2}{2}} \bar{C}_p n^{\frac{p}{2}}K_n\Big(\mathbb{E}\int_{0}^{t_1\wedge \upsilon^*_n}\vert \bar{x}_{\Delta}(s)-x_{\Delta}(s)\vert^pds\Big)\\
&+ 2^{3(p-1)}T^{\frac{p-2}{2}} \bar{C}_p n^{\frac{p}{2}}K_n\Big(\mathbb{E}\int_{0}^{t_1\wedge \upsilon^*_n}\vert x_{\Delta}(s)-x(s)\vert^pds\Big)\\
&\le 2^{2(p-1)}2^{\frac{p}{2}-1}T^{\frac{p-2}{2}} \bar{C}_p\nu(n)^p\int_{0}^{T	}\Big(\mathbb{E}\vert \bar{\varphi}_{\Delta}(s)-\varphi_{\Delta}(s)\vert^p\Big)^{\frac{1}{2}}ds\\
&+2^{2(p-1)}2^{\frac{p}{2}-1}T^{\frac{p-2}{2}} \bar{C}_p\nu(n)^p\int_{0}^{t_1}\Big(\sup_{0\le t\le s} \mathbb{E}\vert \varphi_{\Delta}(t\wedge \upsilon^*_n)-\varphi(t\wedge \upsilon^*_n)\vert^p\Big)^{\frac{1}{2}}ds\\
&+ 2^{3(p-1)}T^{\frac{p-2}{2}} \bar{C}_p n^{\frac{p}{2}}K_n\int_{0}^{T}\mathbb{E}\vert \bar{x}_{\Delta}(s)-x_{\Delta}(s)\vert^pds\\
&+ 2^{3(p-1)}T^{\frac{p-2}{2}} \bar{C}_p n^{\frac{p}{2}}K_n\int_{0}^{t_1}\sup_{0\le t\le s} \mathbb{E}\vert x_{\Delta}(t\wedge \upsilon^*_n)-x(t\wedge \upsilon^*_n)\vert^pds.
\end{align*}
So by combining $\mathcal{J}_6$ and $\mathcal{J}_7$, we now get
\begin{align*}
&\mathbb{E}\Big(\sup_{0\leq t \leq t_1}|x_{\Delta}(t\wedge \upsilon^*_n)-x(t\wedge \upsilon^*_n)|^p\Big)\le 2^{2(p-1)}T^{p-1}K_n\int_{0}^{T}\mathbb{E}|\bar{x}_{\Delta}(s)-x_{\Delta}(s)|^pds\\
&+2^{2(p-1)}T^{p-1}K_n\int_{0}^{t_1}\sup_{0\le t\le s}\mathbb{E}|x_{\Delta}(t\wedge \upsilon^*_n)-x(t\wedge \upsilon^*_n)|^pds\\
&+ 2^{2(p-1)}2^{\frac{p}{2}-1}T^{\frac{p-2}{2}} \bar{C}_p\nu(n)^p\int_{0}^{T	}\Big(\mathbb{E}\vert \bar{\varphi}_{\Delta}(s)-\varphi_{\Delta}(s)\vert^p\Big)^{\frac{1}{2}}ds\\
&+2^{2(p-1)}2^{\frac{p}{2}-1}T^{\frac{p-2}{2}} \bar{C}_p\nu(n)^p\int_{0}^{t_1}\Big(\sup_{0\le t\le s} \mathbb{E}\vert \varphi_{\Delta}(t\wedge \upsilon^*_n)-\varphi(t\wedge \upsilon^*_n)\vert^p\Big)^{\frac{1}{2}}ds\\
&+ 2^{3(p-1)}T^{\frac{p-2}{2}} \bar{C}_p n^{\frac{p}{2}}K_n\int_{0}^{T}\mathbb{E}\vert \bar{x}_{\Delta}(s)-x_{\Delta}(s)\vert^pds\\
&+ 2^{3(p-1)}T^{\frac{p-2}{2}} \bar{C}_p n^{\frac{p}{2}}K_n\int_{0}^{t_1}\sup_{0\le t\le s} \mathbb{E}\vert x_{\Delta}(t\wedge \upsilon^*_n)-x(t\wedge \upsilon^*_n)\vert^pds.
\end{align*}
So, by Lemmas \ref{sec5:eq:L1**}, \ref{sec5:eq:L1**} and \ref{eq:L3}, we hence have 
\begin{align*}
&\mathbb{E}\Big(\sup_{0\leq t \leq t_1}|x_{\Delta}(t\wedge \upsilon^*_n)-x(t\wedge \upsilon^*_n)|^p\Big)\\
&\le 2^{2(p-1)}2^{\frac{p}{2}-1}T^{\frac{p-2}{2}}\bar{C}_p\nu(n)^pT(c_p\Delta^{p/2}(h(\Delta))^p)^{\frac{1}{2}}\\
&+2^{2(p-1)}2^{\frac{p}{2}-1}T^{\frac{p-2}{2}} \bar{C}_p\nu(n)^pT(\mathcal{K}_1\Delta^{p/4})^{\frac{1}{2}}\\
&+(2^{2(p-1)}T^{p-1}K_n+ 2^{3(p-1)}T^{\frac{p-2}{2}}\bar{C}_p n^{\frac{p}{2}}K_n)C_p\Delta^{p/2} (h(\Delta))^p\\
&+(2^{2(p-1)}T^{p-1}K_n+2^{3(p-1)}T^{\frac{p-2}{2}} \bar{C}_p n^{\frac{p}{2}}K_n)\int_{0}^{t_1}\sup_{0\le t\le s} \mathbb{E}\vert x_{\Delta}(t\wedge \upsilon^*_n)-x(t\wedge \upsilon^*_n)\vert^pds.
\end{align*}
In particular, we have 
\begin{align*}
\mathbb{E}\Big(\sup_{0\leq t \leq t_1}|x_{\Delta}(t\wedge \upsilon^*_n)-x(t\wedge \upsilon^*_n)|^p\Big)&\le (\varpi_3+\varpi_4+\varpi_5)\Delta^{p(1/2\wedge 1/4\wedge 1/8)} (h(\Delta))^{p(1/2\wedge 1)}\\
&+ \varpi_6\int_{0}^{t_1}\sup_{0\le t\le s} \mathbb{E}\vert x_{\Delta}(t\wedge \upsilon^*_n)-x(t\wedge \upsilon^*_n)\vert^pds
\end{align*}
where
\begin{align*}
\varpi_3&=2^{2(p-1)}2^{\frac{p}{2}-1}T^{\frac{p-2}{2}}\bar{C}_p\nu(n)^pTc_p^{\frac{1}{2}}\\
\varpi_4&=2^{2(p-1)}2^{\frac{p}{2}-1}T^{\frac{p-2}{2}} \bar{C}_p\nu(n)^pT\mathcal{K}_1^{\frac{1}{2}}\\
\varpi_5&=(2^{2(p-1)}T^{p-1}K_n+ 2^{3(p-1)}T^{\frac{p-2}{2}}\bar{C}_p n^{\frac{p}{2}}K_n)C_p\Delta^{p/2} (h(\Delta))^p\\
\varpi_6&=2^{2(p-1)}T^{p-1}K_n+2^{3(p-1)}T^{\frac{p-2}{2}} \bar{C}_p n^{\frac{p}{2}}K_n.
\end{align*}
The Gronwall inequality shows
\begin{align*}
\mathbb{E}\Big(\sup_{0\leq t \leq t_1}|x_{\Delta}(t\wedge \upsilon^*_n)-x(t\wedge \upsilon^*_n)|^p\Big)\le \mathcal{K}_2\Delta^{p(1/2\wedge 1/4\wedge 1/8)} (h(\Delta))^{p(1/2\wedge 1)}
\end{align*}
as the required assertion, where
\begin{equation*}
\mathcal{K}_2=(\varpi_1+\varpi_2+\varpi_3)e^{\varpi_4}.
\end{equation*}
\end{proof}
The following lemma shows that the truncated EM solutions converge strongly to the exact solution without the stopping time.
\begin{theorem}\label{eq:thrm2}
Let Assumptions \ref{sec3:assump:1} hold. Then for any $p\ge 2$, we have
\begin{equation}\label{eq:42}
\lim_{\Delta\rightarrow 0}\mathbb{E}\Big( \sup_{0\leq t \leq T}|x_{\Delta}(t)-x(t)|^p \Big)=0
\end{equation}
and consequently
\begin{equation}\label{eq:43}
\lim_{\Delta\rightarrow 0}\mathbb{E}\Big( \sup_{0\leq t \leq T}|\bar{x}_{\Delta}(t)-x(t)|^p \Big)=0.
\end{equation}
\end{theorem}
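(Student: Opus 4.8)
The plan is to remove the stopping time from the stopped estimate already proved in Lemma~\ref{eq:l8}. Writing the error as $e_\Delta(t)=x_\Delta(t)-x(t)$ and recalling $\upsilon^*_n=\varrho_{mn}\wedge\vartheta_n\wedge\vartheta^*_n$, the first step is to split according to whether the four processes $x,\varphi,x_\Delta,\varphi_\Delta$ have left $(1/n,n)$ before $T$:
\begin{align*}
\mathbb{E}\Big(\sup_{0\le t\le T}|e_\Delta(t)|^p\Big)
=\mathbb{E}\Big(\sup_{0\le t\le T}|e_\Delta(t)|^p 1_{\{\upsilon^*_n>T\}}\Big)
+\mathbb{E}\Big(\sup_{0\le t\le T}|e_\Delta(t)|^p 1_{\{\upsilon^*_n\le T\}}\Big).
\end{align*}
On $\{\upsilon^*_n>T\}$ we have $t\wedge\upsilon^*_n=t$ for all $t\le T$, so the first term is at most $\mathbb{E}\big(\sup_{0\le t\le T}|x_\Delta(t\wedge\upsilon^*_n)-x(t\wedge\upsilon^*_n)|^p\big)$, which by \eqref{eq:37} is bounded by $\mathcal{K}_2\Delta^{p(1/2\wedge1/4\wedge1/8)}(h(\Delta))^{p(1/2\wedge1)}$ and hence tends to $0$ as $\Delta\to0$ for each fixed $n$ and $m$, since $\Delta^{1/4}h(\Delta)\le1$ by \eqref{sec4:eq:6}.

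For the tail term I would introduce a larger exponent $\bar p>p$ and apply the Young inequality: for every $\delta>0$,
\begin{equation*}
\mathbb{E}\Big(\sup_{0\le t\le T}|e_\Delta(t)|^p 1_{\{\upsilon^*_n\le T\}}\Big)
\le \frac{p\,\delta}{\bar p}\,\mathbb{E}\Big(\sup_{0\le t\le T}|e_\Delta(t)|^{\bar p}\Big)
+C_\delta\,\mathbb{P}(\upsilon^*_n\le T),
\end{equation*}
where $C_\delta$ depends only on $p,\bar p,\delta$. This needs two ingredients. The smallness of the probability I would obtain from
\begin{equation*}
\mathbb{P}(\upsilon^*_n\le T)\le \mathbb{P}(\varrho_{mn}\le T)+\mathbb{P}(\vartheta_n\le T)+\mathbb{P}(\vartheta^*_n\le T),
\end{equation*}
where $\mathbb{P}(\vartheta_n\le T)$ and $\mathbb{P}(\vartheta^*_n\le T)$ are each at most $\epsilon$ by \eqref{sec5:eq:7} and \eqref{eq:7} once $n$ is large and $\Delta$ small, while $\mathbb{P}(\varrho_{mn}\le T)\le\mathbb{P}(\varsigma_n\le T)+\mathbb{P}(\tau_m\le T)\to0$ as $n,m\to\infty$, exactly as in the existence proof of Lemma~\ref{sec3:eq:00}.

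The hard part will be the second ingredient: a bound on $\mathbb{E}\big(\sup_{0\le t\le T}|x_\Delta(t)|^{\bar p}\big)$ that is uniform in $\Delta$, together with its counterpart $\mathbb{E}\big(\sup_{0\le t\le T}|x(t)|^{\bar p}\big)<\infty$, so that $|e_\Delta|^{\bar p}\le 2^{\bar p-1}(|x_\Delta|^{\bar p}+|x|^{\bar p})$ controls the moment term. The delicate point is that the moment estimates at my disposal, Lemmas~\ref{sec3:eq:L2} and~\ref{sec5:eq:L2}, are only pointwise in time and, because of the $\sqrt{\varphi}$ coupling, conditioned on the variance process staying in $(1/m,m)$. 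A direct Burkholder--Davis--Gundy estimate on \eqref{sec4:eq:13} is circular, since the super-linear $g_1$ would demand moments of $x$ of still higher order; the way out is to decouple the $\sqrt{\varphi}$ factor by H\"older against the uniform moments of $\bar\varphi_\Delta$ in Lemma~\ref{sec5:eq:L1} and of $\varphi$ in Lemma~\ref{sec3:eq:L1}, retaining the restriction to $\{\varphi\in(1/m,m)\}$ and absorbing the exceptional set into the already-small $\mathbb{P}(\upsilon^*_n\le T)$.

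With such a uniform bound $M$ in hand, the argument closes by the usual ordering of limits: choose $\delta$ so small that the moment term is below $\epsilon/3$, then $n$ and $m$ so large that $C_\delta\,\mathbb{P}(\upsilon^*_n\le T)<\epsilon/3$, and finally let $\Delta\to0$ so that the stopped term from \eqref{eq:37} falls below $\epsilon/3$; letting $\epsilon\downarrow0$ yields \eqref{eq:42}. Finally, \eqref{eq:43} follows from \eqref{eq:42} and the triangle inequality $\bar x_\Delta(t)-x(t)=(\bar x_\Delta(t)-x_\Delta(t))+(x_\Delta(t)-x(t))$, provided $\mathbb{E}\big(\sup_{0\le t\le T}|\bar x_\Delta(t)-x_\Delta(t)|^p\big)\to0$; this is the supremum strengthening of Lemma~\ref{sec5:eq:L1***}, which I would prove by estimating the increment $x_\Delta(t)-x_\Delta(t_k)$ on each subinterval $[t_k,t_{k+1})$ through \eqref{sec4:eq:6*} and Lemma~\ref{sec5:eq:L1} and then maximising over the finitely many grid points.
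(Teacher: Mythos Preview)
Your approach mirrors the paper's: split on $\{\upsilon^*_n>T\}$, control the stopped part via Lemma~\ref{eq:l8}, handle the tail via Young's inequality (the paper takes $\bar p=2p$) together with the moment Lemmas~\ref{sec3:eq:L2} and~\ref{sec5:eq:L2} and the probability estimates from Lemmas~\ref{sec3:eq:00} and~\ref{eq:L3}, then order the limits $\delta\to n\to\Delta$ and deduce \eqref{eq:43} from \eqref{eq:42} and Lemma~\ref{sec5:eq:L1***}. Your caution about needing a genuine sup-in-time moment bound, and a sup version of Lemma~\ref{sec5:eq:L1***} for \eqref{eq:43}, is actually more careful than the paper, which simply invokes the pointwise Lemmas~\ref{sec3:eq:L2}, \ref{sec5:eq:L2} and \ref{sec5:eq:L1***} at those steps without further comment.
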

\begin{proof}
Let $\varrho_{mn}$, $\vartheta^*_n$ and $\upsilon^*_n$ be the same as before. Now set
\begin{equation*}
e_{\Delta}(t)=x_{\Delta}(t)-x(t).
\end{equation*}
For any arbitrarily $ \delta >0$, we derive from the Young inequality that
\begin{align}\label{eq:44}
\mathbb{E}\Big(\sup_{0\leq t \leq T}|e_{\Delta}(t)|^p\Big)&=\mathbb{E}\Big(\sup_{0\leq t \leq T}|e_{\Delta}(t)|^p1_{\{\varrho_{mn}>T \text{ and }\vartheta^*_n>T\}}\Big)\\
&+\mathbb{E}\Big(\sup_{0\leq t \leq T}|e_{\Delta}(t)|^p1_{\{\varrho_{mn}\le T \text{ or }\vartheta^*_n\le T\}}\Big)\nonumber\\
&\le \mathbb{E}\Big(\sup_{0\leq t \leq T}|e_{\Delta}(t)|^p1_{\{\upsilon^*_n>T\}}\Big)+\frac{\delta}{2}\mathbb{E}\Big(\sup_{0\leq t \leq T}|e_{\Delta}(t)|^{2p}\Big)\nonumber\\
&+\frac{1}{2\delta}\mathbb{P}(\varrho_{mn}\leq T \text{ or }\vartheta^*_n \leq T).
\end{align}
Then for $ p\ge 2$, Lemmas \ref{sec3:eq:L2} and \ref{sec5:eq:L2} give us
\begin{align}\label{eq:45}
 \mathbb{E}\Big(\sup_{0\leq t \leq T} |e_{\Delta}(t)|^{2p}\Big)&\leq 2^{2p}\mathbb{E}\Big(\sup_{0\leq t \leq T}|x(t)|^p\vee \sup_{0\leq t \leq T}|x_{\Delta}(t)|^p\Big)^2
 \nonumber\\
&\le 2^{2p}(c_2\vee c_5)^2.
\end{align}
By Lemmas \ref{sec3:eq:00} and \ref{eq:L3}, we have
\begin{equation}\label{eq:46}
\mathbb{P}(\upsilon^*_n\leq T)\le \mathbb{P}(\varrho_{mn}\leq T) +\mathbb{P}(\vartheta^*_n\leq T).
\end{equation}
Also, by Lemma \ref{eq:l8}, we get
\begin{equation}\label{eq:47}
\mathbb{E}\Big(\sup_{0\leq t \leq T}|e_{\Delta}(t)|^p1_{\{\varsigma_{\Delta,n}>T\}}\Big)\le \mathcal{K}_2\Delta^{p(1/2\wedge 1/4\wedge 1/8)} (h(\Delta))^{p(1/2\wedge 1)}.
\end{equation}
Therefore, we substitute \eqref{eq:45}, \eqref{eq:46} and \eqref{eq:47} into \eqref{eq:44} to have
\begin{align*}
\mathbb{E}\Big(\sup_{0\leq t \leq T}|e_{\Delta}(t)|^p\Big)&\leq \frac{2^{2p}(c_2\vee c_5)^2\delta}{2}+ \mathcal{K}_2\Delta^{p(1/2\wedge 1/4\wedge 1/8)} (h(\Delta))^{p(1/2\wedge 1)}\\
&+ \mathbb{P}(\varrho_{mn}\leq T) +\mathbb{P}(\vartheta^*_n\leq T).
\end{align*}
Given $\epsilon\in (0,1)$, we can choose $\delta$ so that
\begin{equation}\label{eq:48}
\frac{2^{2p}(c_2\vee c_5)^2\delta}{2} \le \frac{\epsilon}{4}.
\end{equation}
Furthermore, for any given $\epsilon\in (0,1)$, there exists $n_o$ such that for $n\geq n_o$, we may choose $\delta$ to obtain
\begin{equation}\label{eq:49}
\frac{1}{2\delta}\mathbb{P}(\varrho_{mn}\leq T)\le \frac{\epsilon}{4}
\end{equation}
and then choose $n(\epsilon)\leq n_o$ such that for $\Delta\in (0,\Delta^1]$, we have
\begin{equation}\label{eq:50}
\frac{1}{2\delta}\mathbb{P}(\vartheta^*_n\leq T)\le \frac{\epsilon}{4}.
\end{equation}
Lastly, we may choose $\Delta\in (0,\Delta^1]$ sufficiently small for $\epsilon\in (0,1)$ such that
\begin{equation}\label{eq:51}
\mathcal{K}_2\Delta^{p(1/2\wedge 1/4\wedge 1/8)} (h(\Delta))^{p(1/2\wedge 1)}\le \frac{\epsilon}{4}.
\end{equation}
We then \eqref{eq:48}, \eqref{eq:49}, \eqref{eq:50} and \eqref{eq:51},  to have
\begin{align*}
\mathbb{E}\Big(\sup_{0\leq t \leq T}|x_{\Delta}(t)-x(t)|^p\Big)\le \epsilon.
\end{align*}
as desired. By Lemma \ref{sec5:eq:L1***}, we also obtain \eqref{eq:43} by letting $\Delta\rightarrow 0$.
\end{proof}
\section{Numerical application}\label{sect6}
We now provide numerical demonstrations to support the theoretical result.
\subsection{Simulation}
In what follows, let us consider the following form of SDE \eqref{sec1:eq:6}
\begin{equation}\label{eq:sm1}
 dx(t)=2(1-x(t)^{5})dt+3\sqrt{\vert\varphi(t)\vert}x(t)^{5/4}dB_1(t),
\end{equation}
with initial data $x_0=0.2$, where $\varphi(t)$ is driven by SDE \eqref{sec1:eq:7} of the form
\begin{equation}\label{eq:sm2}
 d\varphi(t)=2(2-\varphi(t)^{2})dt+0.5\varphi(t)^{3/2}dB_2(t)
\end{equation}
with initial data $\varphi_0=2$. Apparently, the coefficient terms $f_1(x)=2(1-x^{5})$, $g_1(x)=3x^{5/4}$, $f_2(x)=2(2-x^{2})$ and $g_2(\varphi)=0.5\varphi^{3/2}$ of SDE \eqref{eq:sm1} and SDE \eqref{eq:sm2} are locally Lipschitz continuous. Moreover, we observe that
\begin{equation*}
\sup_{\vert x\vert \vee\vert \varphi\vert \le u}\Big(|f_1(x)|\vee|f_2(\varphi)|\vee g_1(x)\vee g_2(\varphi)\Big)\le 10.5 \nu^5,\quad \nu\ge 0,
\end{equation*}
If we choose $h(\Delta)=\Delta^{-1/2}$, then $\nu^{-1}(h(\Delta))=(\Delta/10.5)^{-1/10}$. Using a step size of $10^{-3}$, we get Monte Carlo simulated sample trajectories of SDE \eqref{eq:sm2} and SDE \eqref{eq:sm1} in Figure \ref{Fig:figure1} and Figure \ref{Fig:figure2} respectively.
\begin{figure}[!htbp]
  \centerline{\includegraphics[scale=1]{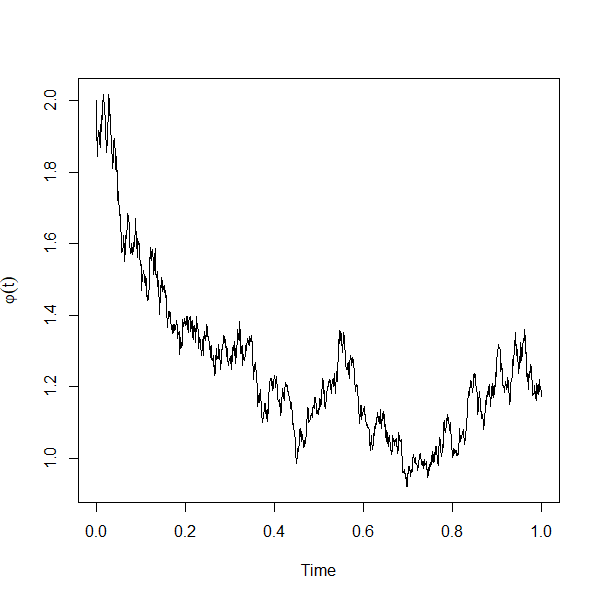}}
  \caption{Simulated sample path of $\varphi(t)$ using $\Delta=0.001$}
  \label{Fig:figure1}
\end{figure}
\begin{figure}[!htbp]
  \centerline{\includegraphics[scale=1]{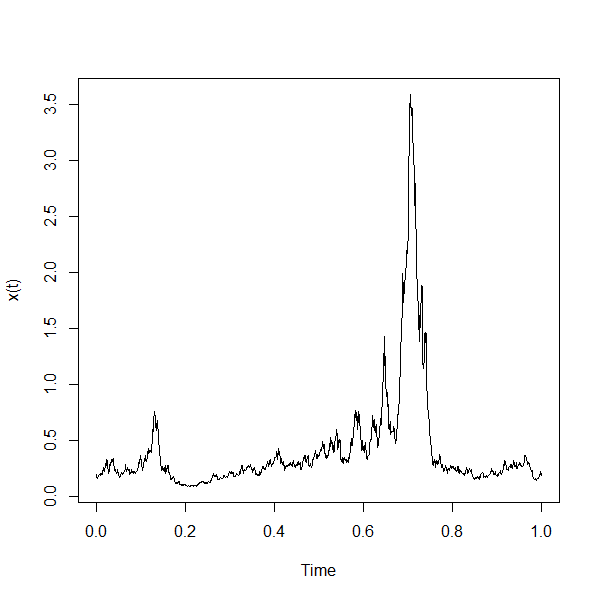}}
  \caption{Simulated sample path of $x(t)$ using $\Delta=0.001$}
  \label{Fig:figure2}
\end{figure}
\subsection{Evaluation}
In this session, we justify that the truncated EM solutions can be used to compute a barrier option with a European payoff $\mathcal{P}$. Let the asset price be the exact solution $x(T)$ to SDE \eqref{sec1:eq:6}, $\mathbb{B}$ be a fixed barrier, $T$ be an expiry date and $\Lambda$ a strike price. Then the exact payoff of a barrier option is
\begin{equation*}\label{5.1}
  \mathcal{P}(T)=\mathbb{E}\Big[ (x(T)-\Lambda)^+1_{\sup_{0\leq t\leq T }}x(t)<\mathbb{B})\Big].
\end{equation*}
Using the step process \eqref{sec4:eq:11}, we could compute the approximate payoff by
\begin{equation*}\label{5.1}
  \mathcal{P}^{\Delta}(T)=\mathbb{E}\Big[ (\bar{x}_{\Delta}(T)-\Lambda)^+1_{\sup_{0\leq t\leq T }}\bar{x}_{\Delta}(t)<\mathbb{B})\Big].
\end{equation*}
So, from Theorem \ref{eq:thrm2}, we have
\begin{equation*}
  \lim_{\Delta\rightarrow 0}|\mathcal{P}(T)-\mathcal{P}^{\Delta}(T)|=0.
\end{equation*}
See \cite{highamao,maobook} for the detailed account.

\end{document}